\newtheorem{lemma}{Lemma}
\newtheorem{corollary}{Corollary}
\newtheorem{proposition}{Proposition}
\newtheorem{example}{Example}
\newcommand{\todo}[1]{{\bf  *** #1 ***}}
\newcommand{\DN}{\mbox{\boldmath $ D $}_N}
\newcommand{\sample}{\mbox{\boldmath $ s $}}
\newcommand{\YN}{\mbox{\boldmath $ Y $}_N}
\newcommand{\yinfty}{\mbox{\boldmath $ y $}_{\infty}}
\newcommand{\xinfty}{\mbox{\boldmath $ x $}_{\infty}}
\newcommand{\XN}{\mbox{\boldmath $ X $}_N}
\newcommand{\yN}{\mbox{\boldmath $ y $}_N}
\newcommand{\xN}{\mbox{\boldmath $ x $}_N}
\newcommand{\s}{\mbox{\boldmath $ s $}}
\newcommand{\Y}{\mbox{\boldmath $ Y $}}
\def\ipot#1#2{{ \left \{ \matrix{ #1\cr #2\cr } \right. }}
\begin{document}

$ \, $

\bigskip

$ \, $

\centerline{\Large \noindent {\bf On the estimation of the Lorenz curve}}

\centerline{\Large \noindent {\bf under complex sampling designs}}

\bigskip

\centerline{Pier Luigi Conti\footnote{{Pier
Luigi Conti. Dipartimento di Scienze
Statistiche; Sapienza Universit\`{a} di Roma; P.le A.
Moro, 5; 00185 Roma; Italy. E-mail
pierluigi.conti@uniroma1.it}}}

\centerline{Alberto Di Iorio\footnote{{Alberto Di Iorio.
Dipartimento di Scienze
Statistiche; Sapienza Universit\`{a} di Roma; P.le A.
Moro, 5; 00185 Roma; Italy. E-mail
alberto.diiorio@uniroma1.it}}}

\centerline{Alessio Guandalini\footnote{{Alessio Guandalini.
ISTAT, Via Cesare Balbo, 16; 00184 Roma; Italy. E-mail
alessio.guandalini@istat.com}}}

\centerline{Daniela Marella\footnote{{Daniela Marella.
Dipartimento di Scienze della Formazione, Universit\`{a} Roma Tre, via del Castro Pretorio 20; 00185 Roma; Italy. E-mail
daniela.marella@uniroma3.it}}}

\centerline{Paola Vicard\footnote{{Paola Vicard.
Dipartimento di Economia, Universit\`{a} Roma Tre, Via Silvio D'Amico, 77; 00145 Roma; Italy. E-mail
paola.vicard@uniroma3.it}}}

\centerline{Vincenzina Vitale\footnote{{Vincenzina Vitale.
Dipartimento di Economia, Universit\`{a} Roma Tre, Via Silvio D'Amico, 77; 00145 Roma; Italy. E-mail
vincenzina.vitale@uniroma3.it}}}

\bigskip

\centerline{\noindent {\bf Abstract}}

This paper focuses on the estimation of the concentration curve of a finite population, when data are collected according to a complex sampling design with different inclusion probabilities. A (design-based) H\'{a}jek type estimator for the Lorenz curve is proposed, and its asymptotic properties are studied. Then, a resampling scheme able to approximate the asymptotic law of the Lorenz curve estimator is constructed. Applications are given to the construction of $(i)$ a confidence band for the Lorenz curve, $(ii)$ confidence intervals for the Gini concentration ratio, and $(iii)$ a test for Lorenz dominance. The merits of the proposed resampling procedure are evaluated through a simulation study.

\bigskip

\noindent
{\bf Keywords}. Concentration, resampling, bootstrap, finite population, superpopulation.

\bigskip

\newpage

\section{Introduction}
\label{sec:introduction}

The analysis of income data is fundamental  in both theoretical and applied research.
In particular, a crucial role is played by the Lorenz curve, that essentially consists in  plotting cumulative income shares
against cumulative population shares. The Lorenz curve is a basic tool to construct inequality measures, including the popular Gini
coefficient.
Furthermore,
the comparison of wealth and earnings distributions is
a fundamental  part of income, wealth, and poverty studies, as well as an important
tool for  public economics.

This justifies statistical inference for Lorenz curve and related quantities.
In the literature, since the paper by \cite{gastwirth72}, several papers have been devoted to the subject. Good reviews are in
\cite{giorgi:99}, \cite{giogig17}; cfr. also \cite{csorgo86}, \cite{zheng02}, \cite{bhattacharya07}, \cite{davidson09}.

A basic condition common to many papers is that sampling observations are independent and identically distributed. Unfortunately, this condition is
hardly ever met in practice (cfr. \cite{giorgi:99}, \cite{zheng02}).

The estimation of inequality measures (mainly Gini's index), when data are collected according to a variable probability sampling design from a finite population, is
widely studied in the literature: cfr. \cite{tillelangel13}, \cite{barperri16}, and references therein. However, the same is not true with regard to the estimation
of the {\em whole} Lorenz curve.
In \cite{zheng02}, the asymptotic law of the sample Lorenz curve (computed at a
finite number of points) is obtained under stratified, cluster and multi-stage sampling plans; the sampling fractions within strata are assumed
``small'', so that the finite population correction term is essentially negligible. This is equivalent, of course, to assume that sampling within
strata is simple {\em with} replacement, so that sample data are essentially {\em i.i.d.}.
A step forward is in a couple of papers by Bhattacharya (cfr. \cite{bhattacharya05}, \cite{bhattacharya07}), where asymptotic results for {\em whole}
 Lorenz curve (estimated {\em via} the generalized moment method), under a multi-stage sample design, are obtained. At each stage, units (either primary
 or secondary) are drawn by simple random sampling with replacement. As a consequence, sample data are independent, although not necessarily identically distributed.

Although the above mentioned papers are of the highest importance, the considered sampling designs do not cover several real cases.
For instance, in Italy
reliable income and wealth data come from the
Survey on Household Income and Wealth (SHIW),
conducted by Banca d'Italia (the Italian central bank) every two years.
The sampling design is two-stage, with municipalities and households as primary and secondary sampling
units, respectively. Primary units are stratified by administrative region and population size (less
than 20,000 inhabitants; in between 20,000 and 40,000; 40,000 or more). Within each stratum, primary units
are selected to include all municipalities with a population of 40,000 inhabitants or more; smaller municipalities are selected by using
inclusion probability proportional to size sampling (without replacement). Individual households are then randomly selected,
via simple random sampling without replacement, from
administrative registers. Similar considerations hold for the EU-SILC (Statistics on Income and Living Conditions) survey.

Generally speaking, the sampling design can be dropped whenever the sampling design is ignorable; cfr. \cite{pfeffermann93} and references therein.
In general, a sampling design is ignorable provided that two conditions are met:
\begin{itemize}
\item[{\em Ig 1.}] the sampling design is non-informative, {\em i.e.} the probability of drawing a sample only depends on the values of design variables,
but not on the variable of interest;
\item[{\em Ig 2.}] the values of the design variables are known for all population units.
\end{itemize}

Now, condition {\em Ig 1} is usually satisfied, whilst condition {\em Ig 2} is not, at least for final users of data produced by Official Statistics, since
micro-data are usually released together with sampling weights ({\em i.e.} reciprocals of inclusion probabilities) for sample units only. For instance, this
is exactly what happens in SHIW and EU-SILC.
Ignoring the sample design when it is not ignorable can produce severely biased inference; cfr. the illuminating remarks in \cite{pfeffermann93}.

In the present paper, in view of their importance in applications,
we focus on sampling designs with first inclusion probabilities proportional to a size measure ($\pi${\em ps}
designs). Furthermore, the primary interest is in making inference on the Lorenz curve at a ``superpopulation level''. The results are of
asymptotic nature, with both the population and the sample size increasing. They can be viewed as an
extension of results on the Lorenz curve estimation that are valid in case of {\em i.i.d.} data.

The paper is organized as follows. In Section \ref{sec:problem} the problem is described, and the main assumptions are listed.
In Section \ref{sec:main_basic}, the main asymptotic results are provided. Section \ref{sec:resamp} is devoted to defining the multinomial resampling scheme,
and to establish its properties. Section \ref{sec:confidence_band} is devoted to the construction of a confidence band for the Lorenz curve.
Section \ref{sec:conc_index} focuses on statistical inference for Gini concentration index, and Section \ref{sec:lorenz_dominance} on the construction of a test for Lorenz dominance. Finally, in Section \ref{sec:simulation} a
simulation study is performed.

\section{The problem}
\label{sec:problem}

\subsection{Superpopulation model}
\label{sec:superpopulation}

Let ${\mathcal U}_N$ be a finite population of size $N$. If $Y$ denotes a non-negative character of interest, let $y_i$ be the value of character $Y$ for
unit $i$ ($=1, \, \dots , \, N$). Each $y_i$ value is assumed to be a realization of a random variable (r.v.) $Y_i$; the $N$-variate r.v. $\Y_N = (Y_1 \, \cdots \, Y_N )$
is the {\em superpopulation}.

In the sequel, the r.v.s $Y_i$ are assumed to be independent and identically distributed ({\em i.i.d.}), and their distribution function (d.f.) is denoted by
\begin{eqnarray}
F(y) = Pr (Y_i \leq y) . \label{eq:superpop_df}
\end{eqnarray}
The superpopulation quantile of order $p$, with $0 < p < 1$, is defined as
\begin{eqnarray}
Q(p) = \inf \{ y: \; F(y) \geq p \} . \label{eq:superp_quantile}
\end{eqnarray}
The superpopulation generalized Lorenz curve $G( \cdot )$ is obtained by integrating the quantile function $Q( \cdot )$. In symbols
\begin{eqnarray}
G(p) = \int_{0}^{p} Q(u) \, du , \;\; 0 \leq p \leq 1. \label{eq:superp_gen_lorenz}
\end{eqnarray}
The curve $G( \cdot )$ is continuous, increasing, convex, with $G(0)=0$ and
\begin{eqnarray}
G(1)=  \mathbb{E} [ Y_i ] = \mu_Y ,
\label{eq:superpopol_g1}
\end{eqnarray}
\noindent the superpopulation mean.

The superpopulation Lorenz curve is the normalized version of $( \ref{eq:superp_gen_lorenz} )$, namely
\begin{eqnarray}
L(p)= \frac{G(p)}{G(1)} = \frac{G(p)}{\mu_{Y}} , \;\; 0 \leq p \leq 1.
\label{eq:superp_lorenz}
\end{eqnarray}
Of course, $L( \cdot )$ is convex, continuous, increasing, with $L(0)=0$, $L(1)=1$.

In the sequel, attention is devoted to the estimation of $L(p)$. Due to the effect of the sampling design, even if the r.v.s
$Y_i$s are {\em i.i.d.} at a (super)population level, they are not {\em i.i.d.} at a sample level (except very special cases); cfr. \cite{pfeffermann93}.

Alongside $( \ref{eq:superpop_df} )$-$( \ref{eq:superp_lorenz} )$, one may define the corresponding {\em finite population} counterparts.
The finite population distribution function (p.d.f., for short) is defined as
\begin{eqnarray}
F_N (y) = \frac{1}{N} \sum_{i=1}^{N} I_{(- \infty , \, y]} (y_i) , \;\; y \in \mathbb{R}
\label{eq:popul_distr_fc}
\end{eqnarray}
\noindent where
\begin{eqnarray}
I_A (y) = \ipot{1 \; {\mathrm {if}} \; y \in A}{0 \; {\mathrm {if}} \; y \notin  A}  .
\nonumber
\end{eqnarray}
\noindent Clearly, $F(y)$ is the expectation of $F_N (y)$ w.r.t. the superpopulation probability distribution: $F(y) = \mathbb{E} [ F_N (y) ]$.

The finite population quantile of order $p$ is
\begin{eqnarray}
Q_N (p) = \inf \{ y: \; F_N (y) \geq p \} , \;\; 0 < p < 1
\label{eq:p-quantile}
\end{eqnarray}

Next, the finite population
{\em generalized Lorenz curve} is defined as
\begin{eqnarray}
G_N (p) = \int_{0}^{p} Q_N (u) \, du , \;\; 0 \leq p \leq 1 .
\label{eq:generalized_lorenz}
\end{eqnarray}
Clearly, $G_N ( \cdot )$ is continuous, increasing, and convex, with $G_N (0) = 0$ and
\begin{eqnarray}
G_N (1) = \overline{Y}_N = \frac{1}{N} \sum_{i=1}^{N} y_i
\label{eq:max_G}
\end{eqnarray}
\noindent  $\overline{Y}_N$ being the {\em finite population mean}.

The {\em Lorenz curve}, in its turn, is the normalized version of $( \ref{eq:generalized_lorenz} )$, namely
\begin{eqnarray}
L_N (p) = \frac{G_N (p)}{G_N (1)} = \frac{G_N (p)}{\overline{Y}_N}  , \;\; 0 \leq p \leq 1 .  \label{eq:lorenz-curve}
\end{eqnarray}
\noindent
Of course, $L_N (p)$ is increasing, continuous, convex, with $L_N (0) = 0$,  $L_N (1) =1$.

As already said, our main interest is in estimating the Lorenz curve at a superpopulation level. However, the estimation of
$( \ref{eq:popul_distr_fc} )$-$( \ref{eq:lorenz-curve} )$ will play an important, although indirect, role.

\subsection{Sampling design and superpopulation model: basic aspects}
\label{sec:superpopul_model}

In general, a sample is a subset of the population ${\mathcal U}_N$.
For each unit $i \in {\mathcal U}_N$, define a  Bernoulli random variable (r.v.) $D_i$ such that $i$ is  (is not) in the sample whenever  $D_i =1$ ($D_i =0$);   denote further by $\DN$  the $N$-dimensional vector of components $D_1$, $\dots$, $D_N$.
A (unordered, without replacement) sampling design $P$ is the probability distribution of the random vector $\DN$. The expectations  $\pi_i = E_P [ D_i ]$ and
$\pi_{ij} = E_P [ D_i  \, D_j ]$ are the first and second order inclusion probabilities, respectively. The suffix $P$ denotes the sampling design used to select population units. The sample  size is $n_s = D_1 + \cdots + D_N$. In the present paper we focus on fixed size sampling designs, such that $n_s \equiv  n$.

In practice, the sampling design is constructed on the basis of the value of the {\em design variables}, {\em i.e.} auxiliary variables known for all
population units (cfr. \cite{pfeffermann93}). In particular, the first order inclusion probabilities are frequently chosen to be proportional to an auxiliary variable $X$, depending itself on the design variables: $\pi_i \propto x_i$, $i=1, \, \dots , \, N$. A special case is the stratified sampling design, where $x_i$ is proportional to the weight of the stratum containing unit $i$.
The rationale of the choice $\pi_i \propto x_i$
is simple: if the values of the variable of interest are positively correlated with (or, even better, approximately proportional to) the values of $X$, then the Horvitz-Thompson estimator of the population mean will be highly efficient.

For each unit $i$, let $p_i$ be a positive number, with  $p_1 + \cdots + p_N =n$. The {\em Poisson sampling design} ($Po$, for short) with parameters $p_1$, $\dots $, $p_N$ is characterized by the independence of the r.v.s $D_i$s, with $Pr_{Po} ( D_i =1 ) = p_i$. In symbols
\begin{eqnarray}
Pr_{Po} ( \DN ) = \prod_{i=1}^{N} p_i^{D_{i}} (1- p_i )^{1- D_{i}} .
\nonumber
\end{eqnarray}
The {\em rejective sampling} ($P_R$), or {\em normalized conditional Poisson sampling}
(\cite{hajek64}, \cite{tille06}) corresponds to the probability distribution of the random vector $\DN$, under Poisson design, conditionally on $n_s =n$.

The {\em Hellinger distance} between a sampling design $P$ and the rejective design is defined as
\begin{eqnarray}
d_H (P, \, P_R ) = \sum_{D_1 , \, \dots , \, D_N} \left (
\sqrt{Pr_P ( \DN )}  - \sqrt{Pr_R ( \DN )} \right )^2 .
\label{eq:hellinger}
\end{eqnarray}

For each $N$, $( y_i , \, t_{i1} , \, \dots , \, t_{iL} )$, $i=1, \, \dots , \, N$ are
 realizations of a superpopulation $\{ ( Y_i , \, T_{i1} , \, \dots , \, T_{iL} ) , \; i=1 , \, \dots , \, N\}$ composed by
{\em i.i.d.} $(L+1)$-dimensional r.v.s.
In the sequel, the symbol $\mathbb{P}$ will denote the (superpopulation) probability distribution of r.v.s $( Y_i , \, T_{i1} , \, \dots , \, T_{iL} )$s, and
 $\mathbb{E}$, $\mathbb{V}$ are the corresponding operators of mean and variance, respectively.

It is important to observe that $Y_i$s are assumed marginally {\em i.i.d.}. Conditionally on $ T_{i1} , \, \dots , \, T_{iL} $, $Y_i$s are still independent, but not necessarily identically distributed. This covers, among others, the important case of stratified populations.

\subsection{Assumptions}
\label{sec:assumptions}

Our assumptions on both the superpopulation model and sampling design are listed below.
\begin{itemize}
\item[A1.] $( {\mathcal U}_N ; \; N \geq 1)$ is a sequence of finite populations of increasing size $N$.
\item[A2.] For each $N$, $( y_i , \, t_{i1} , \, \dots , \, t_{iL} )$, $i=1, \, \dots , \, N$ are
 realizations of a superpopulation $\{ ( Y_i , \, T_{i1} , \, \dots , \, T_{iL} ) , \; i=1 , \, \dots , \, N\}$ composed by
{\em i.i.d.} $(L+1)$-dimensional r.v.s, with $Y_i \geq 0$ almost surely.
\item[A3.] The d.f. $F(y)$ $( \ref{eq:superpop_df} )$ is continuously differentiable, with density function $f(y) = d F(y) / dy$ strictly positive of every
interval $[a, \, b]$ with $a>0$, $b < \infty$. Furthermore, $E[ Y_i^2] < \infty$ and:
\begin{eqnarray}
\lim_{y \downarrow 0} \frac{F(y)^{\gamma}}{f(y)} = 0 , \;\;  \lim_{y \uparrow \infty} \frac{(1-F(y))^{\gamma}}{f(y)} = 0
\label{eq:condizione_code}
\end{eqnarray}
for some $0 < \gamma < 1$.
\item[A4.] For each population ${\mathcal U}_N$, sample units are selected according to a fixed size sample design  with positive first order inclusion probabilities
$\pi_1 $, $\dots $, $\pi_N$, and sample size $n = \pi_1 + \cdots + \pi_N$. The first order inclusion probabilities are taken proportional to $x_i =
h ( t_{i1} , \, \dots , \, t_{iL} )$, $i=1, \, \dots , \, N$, $h( \cdot )$ being an arbitrary (positive) function. To avoid complications in the notation, we will assume that
$\pi_i = n x_i / \sum_{i=1}^{N} x_i$ for each unit $i$.
It is also assumed that
\begin{eqnarray}
\lim_{N, n \rightarrow \infty} \mathbb{E} [ \pi_i ( 1- \pi_i ) ] = d >0 .  \label{eq:quant_d}
\end{eqnarray}
Furthermore, the notation $\xN = (x_1 , \dots , \, x_N )$ is used.
\item[A5.]
The sample size $n$ increases as the population size $N$ does, with
\begin{eqnarray}
\lim_{N \rightarrow \infty} \frac{n}{N} = f, \;\; 0 < f < 1 .
\nonumber
\end{eqnarray}
\item[A6.] For each population $( {\mathcal U}_N ; \; N \geq 1)$, let $P_R$ be
the rejective sampling design with inclusion probabilities $\pi_1$, $\dots$, $\pi_N$, and let $P$ be the actual sampling design
(with the same inclusion probabilities). Then
\begin{eqnarray}
d_H (P, \, P_R ) \rightarrow 0 \;\; {\mathrm as} \; N \rightarrow \infty , \;\; a.s.-{\mathbb{P}}. \nonumber
\end{eqnarray}
\item[A7.] $\mathbb{E} [ X_1^2 ] < \infty$, so that the quantity in $(  \ref{eq:quant_d} )$ is equal to:
\begin{eqnarray}
 d = f \left ( 1- \frac{\mathbb{E} [ X_1^2 ]}{\mathbb{E} [ X_1 ]^2} \right ) +
f (1-f) \frac{\mathbb{E} [ X_1^2 ]}{\mathbb{E} [ X_1 ]^2} > 0 .
\label{eq:val_d}
\end{eqnarray}
\end{itemize}

\section{Basic asymptotic results}
\label{sec:main_basic}

Due to the effect of the sampling design, the empirical distribution function (e.d.f.)
\begin{eqnarray}
F_n (y) = \frac{1}{n} \sum_{i=1}^{N} I_{(- \infty , \, y]} (Y_i)
\label{eq:emp_df}
\end{eqnarray}
\nonumber is {\em inconsistent}.
In fact, in view of the law of large numbers,
\begin{eqnarray}
F_n (y) & \stackrel{p}{\rightarrow} & \mathbb{E} \left [ \frac{X_{i}}{\mathbb{E}[X_{i}]}
I_{(- \infty , \, y]} (Y_i ) \right ] \nonumber \\
\, & = & \frac{1}{\mathbb{E}[X_{i}]} \mathbb{E} \left [ X_i I_{(- \infty , \, y]} (Y_i ) \right ] \nonumber \\
\, & \neq & F(y) \nonumber
\end{eqnarray}
\noindent unless $X_i$ and $Y_i$ are independent.
As a consequence, the {\em empirical Lorenz curve} studied, for instance, in \cite{csorgo86}, is {\em inconsistent}, too.

The first, basic step consists in constructing a consistent estimator of $F(y)$, and then in studying the
asymptotic distribution of the corresponding estimate of $L(p)$.
The d.f. $F$ is estimated by using the (design-based) H\'{a}jek estimator
\begin{eqnarray}
\widehat{F}_{H} (y) & = & \frac{\sum_{i=1}^{N} \frac{1}{\pi_{i}} D_i I_{(- \infty , \, y]} (y_i )}{\sum_{i=1}^{N} \frac{1}{\pi_{i}} D_i}  ,
\label{eq:dfhajek}
\end{eqnarray}
\noindent that generates the corresponding ``empirical process''
\begin{eqnarray}
\mathcal{W}_{N}^{H} ( \cdot ) = \{ \sqrt{n} ( \widehat{F}_H (y) - F (y) )  , \: y \in \mathbb{R} \} ; \;\; N \geq 1 . \label{eq:empirical_hajek}
\end{eqnarray}

The weak convergence properties of  $( \ref{eq:empirical_hajek} )$ are studied in \cite{contiorio18}, \cite{bastard17}.

Denote by
\begin{eqnarray}
S ( y, \, x) = \mathbb{P} ( Y_i \leq y , \, X_i \leq x )
\label{eq:joint_sdf}
\end{eqnarray}
\noindent the joint superpopulation d.f. of $( Y_i , \, X_i )$, and by
\begin{eqnarray}
F (y) = \mathbb{P} ( Y_i \leq y ) = S( y , \, + \infty ) , \;\;
M (x) = \mathbb{P} ( X_i \leq x ) = S( + \infty , \, x ) ,
\label{eq:marginal_sdf}
\end{eqnarray}
\noindent  the marginal superpopulation d.f.s of $Y_i$ and $X_i$, respectively. Furthermore,
from now on the notation
\begin{eqnarray}
K_{\alpha} (y) = \mathbb{E} \left [ \left . X_1^{\alpha} \, \right \vert Y_1 \leq y \right ] , \;\;
y \in {\mathbb{R}}, \; \alpha = 0, \, \pm 1, \, \pm 2 \label{eq:def_K}
\end{eqnarray}
\noindent will be used. Note that $K_{\alpha} ( + \infty ) = {\mathbb{E}} [ X_1^{\alpha} ]$.

\begin{proposition}
\label{asympt_hajek}
Assume that conditions A1-A7 are satisfied, and define
\begin{eqnarray}
\mathcal{W}_{1N}^{H} ( \cdot ) = \{ \sqrt{n} ( \widehat{F}_H (y) - F_N (y) )  , \: y \in \mathbb{R} \} ; \;\; N \geq 1 . \label{eq:empirical_hajek1} \\
\mathcal{W}_{2N}^{H} ( \cdot ) = \{ \sqrt{n} ( F_N (y) - F (y) )  , \: y \in \mathbb{R} \} ; \;\; N \geq 1  \label{eq:empirical_hajek2}
\end{eqnarray}
\noindent so that $\mathcal{W}_{N}^{H} ( \cdot ) = \mathcal{W}_{1N}^{H} ( \cdot ) + \mathcal{W}_{2 N}^{H} ( \cdot )$.
Define further
\begin{eqnarray}
C_1 (y, \, t) & = &
 f \left \{ \frac{\mathbb{E} [ X_1 ]}{f} K_{-1} (y \wedge t) -1  \right \} F( y \wedge t) \nonumber \\
\, & \, &
- \frac{f^3}{d} \left ( 1 - \frac{K_1 (y)}{\mathbb{E} [ X_1 ]} \right )
\left ( 1 - \frac{K_1 (t)}{\mathbb{E} [ X_1 ]} \right ) F(y) F(t) \nonumber \\
\, & \, & - f
\left \{ \frac{\mathbb{E} [ X_1 ]}{f} \left ( K_{-1} (y) + K_{-1} (t) - \mathbb{E} \left [
X_1^{-1} \right ] - 1
\right )
\right \} F(y) F(t) ,
\label{eq:cov_ker_hajek}
\end{eqnarray}
\noindent with $d$ given by $( \ref{eq:val_d} )$, and
\begin{eqnarray}
C_2 (y, \, t) = F (y \wedge t) - F(y) F(t) .
\label{eq:nucleo_brownian}
\end{eqnarray}
Then, the following statements hold.
\begin{itemize}
\item The sequence $( \mathcal{W}_{1N}^{H} ( \cdot ) ; \; N \geq 1) $ converges weakly, in $D[ - \infty , \: + \infty ]$ equipped with the Skorokhod topology,
to a Gaussian process $\mathcal{W}^H_1 ( \cdot )$ with  zero mean function and covariance kernel $C_1 (y, \, t)$ $( \ref{eq:cov_ker_hajek} )$.
\item The sequence $( \mathcal{W}_{2N}^{H} ( \cdot ) ; \; N \geq 1) $ converges weakly, in $D[ - \infty , \: + \infty ]$ equipped with the Skorokhod topology,
to a Gaussian process $\mathcal{W}^H_2 ( \cdot )$ with  zero mean function and covariance kernel $f C_2 (y, \, t)$ $( \ref{eq:nucleo_brownian} )$.
\item The two sequences $( \mathcal{W}_{1N}^{H} ( \cdot ) ; \; N \geq 1) $, $( \mathcal{W}_{2N}^{H} ( \cdot ) ; \; N \geq 1) $ are asymptotically independent,
so that
the sequence $ ( \mathcal{W}_{N}^{H} ( \cdot) ; \; N \geq 1) $, converges weakly, in $D[ - \infty , \: + \infty ]$ equipped with the Skorokhod topology,
to a Gaussian process
$\mathcal{W}^H ( \cdot ) = (\mathcal{W}^H (y) ; \; y \in \mathbb{R} )$ with zero mean function and covariance kernel
\begin{eqnarray}
C(y, \, t) = C_1 (y, \, t) + f C_2 (y, \, t)
\label{eq:nucleo_cov}
\end{eqnarray}
\end{itemize}
\end{proposition}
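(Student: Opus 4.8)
The plan is to exploit the additive decomposition $\mathcal{W}_N^H = \mathcal{W}_{1N}^H + \mathcal{W}_{2N}^H$, in which the two summands isolate the two distinct sources of randomness: $\mathcal{W}_{2N}^H$ depends only on the finite population (the draw of $\yN$ from the superpopulation), whereas $\mathcal{W}_{1N}^H$ carries, in addition, the randomness of the sampling mechanism $\DN$ given the population. Accordingly, the whole argument rests on conditioning on the sequence of finite populations and treating the design-based limit and the superpopulation limit separately. The second bullet is the easiest and I would settle it first: $F_N$ is simply the empirical distribution function of the {\em i.i.d.} sample $Y_1, \dots, Y_N$ from $F$, so Donsker's theorem gives weak convergence in $D[ - \infty , \: + \infty ]$ of $\sqrt{N}(F_N(\cdot)-F(\cdot))$ to an $F$-Brownian bridge with kernel $C_2(y,t)$ of $(\ref{eq:nucleo_brownian})$. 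Since $\mathcal{W}_{2N}^H(\cdot) = \sqrt{n/N}\,\sqrt{N}(F_N(\cdot)-F(\cdot))$ and $n/N \to f$ by A5, the factor $\sqrt{n/N}\to\sqrt f$ inflates the covariance by $f$, giving the kernel $f\,C_2(y,t)$.

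For the first bullet I would work conditionally on the realized populations $\{(y_i,x_i)\}$. Assumption A6 lets me replace the actual design $P$ by the rejective design $P_R$: because the Hellinger distance $(\ref{eq:hellinger})$ dominates the total variation distance and $d_H(P,P_R)\to 0$ a.s.-$\mathbb{P}$, the conditional weak limit under $P$ coincides with that under $P_R$. Under rejective sampling I would invoke the functional central limit theorem for the H\'{a}jek empirical process of \cite{contiorio18}, \cite{bastard17}: conditionally on the population, $\mathcal{W}_{1N}^H$ converges weakly to a zero-mean Gaussian process whose covariance is the design-based asymptotic covariance of the ratio estimator $(\ref{eq:dfhajek})$. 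That covariance is a population average of functions of the $x_i$'s and $y_i$'s; by the strong law of large numbers (A2) each such average converges a.s. to the corresponding superpopulation conditional moment $K_\alpha(\cdot)$ of $(\ref{eq:def_K})$, yielding the deterministic kernel $C_1(y,t)$ of $(\ref{eq:cov_ker_hajek})$. Its explicit form is recovered by linearizing the H\'{a}jek ratio and collecting the three contributions: the diagonal Poisson variance, the rejective correction of order $f^3/d$, and the cross terms in $K_{-1}$. Crucially, the conditional limit law is non-random.

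For the third bullet I would compare characteristic functionals. For finite-dimensional linear functionals with coefficient vectors $\ax,\bx$, I condition on $\YN$ and factorize
\begin{eqnarray}
\mathbb{E}\left[ e^{i\langle \ax, \mathcal{W}_{1N}^H\rangle + i\langle \bx, \mathcal{W}_{2N}^H\rangle} \right]
= \mathbb{E}\left[ e^{i\langle \bx, \mathcal{W}_{2N}^H\rangle}\, E_P\!\left[ e^{i\langle \ax, \mathcal{W}_{1N}^H\rangle} \,\big\vert\, \YN \right] \right] ,
\nonumber
\end{eqnarray}
which is legitimate because $\mathcal{W}_{2N}^H$ is $\YN$-measurable. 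The inner conditional characteristic function converges a.s.-$\mathbb{P}$ to the {\em constant} characteristic function of $\mathcal{W}_1^H$ (by the previous step); bounded convergence then detaches it from the outer expectation, which itself converges to the characteristic function of $\mathcal{W}_2^H$. The limiting joint characteristic function thus factors, giving both asymptotic independence and the additive kernel $C(y,t)=C_1(y,t)+f\,C_2(y,t)$; process-level tightness of $\mathcal{W}_N^H$ follows from that of each summand.

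The main obstacle I anticipate lies precisely in this last step: one must guarantee that the conditional weak convergence of $\mathcal{W}_{1N}^H$ holds along almost every population sequence with a limit that does {\em not} depend on the realization, and that this convergence is strong enough to interchange limit and outer expectation. Uniform boundedness of the characteristic functionals makes the interchange clean at the level of finite-dimensional distributions, but the process-level statement additionally requires the joint tightness rather than mere convergence of finite-dimensional distributions. The two technical pillars supporting the interchange are the a.s.-$\mathbb{P}$ reduction from $P$ to $P_R$ through A6, and the a.s. convergence of the population covariance averages to $C_1$; both must be verified uniformly enough in $y,t$ to secure tightness.
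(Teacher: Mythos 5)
Your proposal is correct and follows essentially the same route as the paper, which itself defers the proof entirely to \cite{contiorio18} and \cite{bastard17}: a H\'{a}jek-type functional CLT for the design-based component under rejective sampling, extended to high-entropy designs via the Hellinger-distance condition A6, combined with the classical Donsker theorem for the superpopulation component and a conditioning argument yielding asymptotic independence and the additive kernel. The technical points you flag (almost-sure convergence of the population covariance averages to a non-random kernel $C_1$, and joint tightness) are precisely the ingredients established in those references.
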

\begin{proof} See \cite{contiorio18} or \cite{bastard17}.
\end{proof}

In particular, if $F(y) $ is continuous and the sampling design is simple random sampling without replacement of size $n$,
the H\'{a}jek estimator $( \ref{eq:dfhajek} )$
reduces to the empirical d.f. $F_n (y) = n^{-1} \sum_{i} D_i I_{( - \infty , \, y]} ( y_i )$. Furthermore,  in this case
$C_1 (y, \, t) = (1-f) (F( y \wedge t) - F (y) F(t))$. Hence, if $F(y) $ is continuous and the sampling design is simple random sampling without replacement,
$C(y, \, t) = F(y \wedge t) - F(y) F(t)$, and the limiting process $W( \cdot )$ can be represented as $\mathcal{W}^H (y) =  \mathcal{B} (F(y))$,
$\mathcal{B} ( \cdot ) = \{ \mathcal{B} (p) ; \; 0 \leq p \leq 1 \}$ being a Brownian bridge.

The term $( \ref{eq:cov_ker_hajek} )$ can be equivalently re-written as:
\begin{eqnarray}
C_1 (y, \, t) & = &
 \left \{\mathbb{E} [ X_1 ] T_{-1} (y \wedge t) - f  F( y \wedge t) \right \}  \nonumber \\
\, & \, &
- \frac{f^3}{d} \left ( \frac{T_1 (y)}{\mathbb{E} [ X_1 ]} - F(y) \right )
\left (  \frac{T_1 (t)}{\mathbb{E} [ X_1 ]} - F(t)  \right ) \nonumber \\
\, & \, & -
\left \{ \mathbb{E} [ X_1 ] \left ( T_{-1} (y) F(t) + T_{-1} (t) F(y) - \left ( \mathbb{E} \left [
X_1^{-1} \right ] + 1 \right ) F(y) F(t)
\right )
\right \}  ,
\label{eq:cov_ker_hajek2}
\end{eqnarray}
\noindent where
\begin{eqnarray}
T_{\alpha} (y) & = &  \mathbb{E} \left [ X_1^{\alpha} I_{( Y_{1} \leq y)}   \right ] \nonumber \\
\, & = & \int_{0}^{y}  \mathbb{E} \left [ \left .  X_1^{\alpha} \, \right \vert Y_1 =u  \right ] \, d F(u)
, \;\; \alpha = 0, \pm 1 .
\label{eq:def_talfa}
\end{eqnarray}

The map $ y \mapsto T_{\alpha} (y)$ is monotone non-decreasing, right continuous, with $T_{\alpha} (y) \downarrow 0$ as $y \downarrow 0$ and
$T_{\alpha} (y) \uparrow \mathbb{E} [ X_1^{\alpha} ]$ as $y \uparrow + \infty$. Hence, $T_{\alpha} (y)$ induces a finite measure on the real line
(equipped with the Borel $\sigma$-field). In view of $( \ref{eq:def_talfa} )$, such a measure is absolutely continuous w.r.t. the probability
measure induced by $F(y)$, and $\mathbb{E} \left [ \left .  X_1^{\alpha} \, \right \vert Y_1 =y  \right ]$ is the corresponding
Radon-Nikodym derivative. In symbols:
\begin{eqnarray}
T_{\alpha} ( \cdot ) \ll F( \cdot ) \; {\mathrm{with}} \; \frac{d T_{\alpha} (y)}{d F(y)} = \mathbb{E} \left [ \left .
 X_1^{\alpha} \, \right \vert Y_1 = y  \right ] . \label{eq:radon}
\end{eqnarray}

Next assumption C1 ensures that the trajectories of the limiting process $\mathcal{W}^H ( \cdot )$ behave regularly, {\em i.e.} that they are (uniformly) continuous and bounded over the real line.

\begin{itemize}
\item[C1.] The conditional expectation  $\mathbb{E} \left [ \left .
 X_1^{\alpha} \, \right \vert Y_1 = y  \right ]$ is bounded w.r.t. $y$:
\begin{eqnarray}
\left \vert \mathbb{E} \left [ \left .
 X_1^{\alpha} \, \right \vert Y_1 = y  \right ] \right \vert \leq M_{\alpha} \;\; \forall \, y ; \; \alpha = \pm1 . \label{eq:radon-bounded}
\end{eqnarray}
\end{itemize}

\begin{proposition}
\label{continuous_traject}
Define the Gaussian process $ \mathcal{B}^H (p) = \mathcal{W}^H ( Q(p) )$, with $0 \leq p \leq 1$. If $F(y)$ is continuous, then
the process $\mathcal{B}^H$ possesses with probability 1 trajectories that are continuous (and bounded) in $[0, \, 1]$.
\end{proposition}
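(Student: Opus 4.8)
The plan is to treat $\mathcal{B}^H$ as a centred Gaussian process on the \emph{compact} index set $[0,1]$ and to deduce sample-path continuity (hence boundedness) from a single linear bound on the variance of its increments, via the Kolmogorov--Chentsov criterion. Writing $g(p,q) = C(Q(p),Q(q))$ for its covariance function, with $C$ as in $(\ref{eq:nucleo_cov})$, we have
\begin{eqnarray}
\mathbb{E}\left[ \left( \mathcal{B}^H(p) - \mathcal{B}^H(q) \right)^2 \right] = g(p,p) - 2 g(p,q) + g(q,q) ,
\nonumber
\end{eqnarray}
so the whole argument reduces to proving that $g$ is (jointly) Lipschitz on $[0,1]^2$. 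Indeed, if $|g(u,v) - g(u',v')| \leq \Lambda ( |u-u'| + |v-v'| )$, then, using the symmetry $g(p,q)=g(q,p)$, the right-hand side above is at most $2 \Lambda |p-q|$.

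To establish the Lipschitz property I would first record the regularity of the building blocks of $g$. Continuity of $F$ gives $F(Q(p)) = p$, a perfectly linear term. By the Radon--Nikodym representation $(\ref{eq:radon})$ and assumption C1 $(\ref{eq:radon-bounded})$, for $\alpha = \pm 1$,
\begin{eqnarray}
\left| T_{\alpha}(Q(p)) - T_{\alpha}(Q(q)) \right| = \left| \int_{Q(q)}^{Q(p)} \mathbb{E}\left[ \left. X_1^{\alpha} \, \right\vert Y_1 = u \right] \, dF(u) \right| \leq M_{\alpha} \left| F(Q(p)) - F(Q(q)) \right| = M_{\alpha} |p - q| ,
\nonumber
\end{eqnarray}
so $p \mapsto T_{\pm 1}(Q(p))$ are Lipschitz and bounded on $[0,1]$ (with values $0$ at $p=0$ and $\mathbb{E}[X_1^{\pm 1}]$, finite by C1, at $p=1$). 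Substituting these into the expression $(\ref{eq:cov_ker_hajek2})$ for $C_1$ and $(\ref{eq:nucleo_brownian})$ for $C_2$, one sees that $g$ is a fixed polynomial of degree at most two in $F(Q(\cdot))$ and $T_{\pm 1}(Q(\cdot))$, modified only by the minimum operations arising from the $y \wedge t$ arguments. Since a minimum of Lipschitz functions is Lipschitz, and since sums and products of bounded Lipschitz functions on $[0,1]$ remain Lipschitz, $g$ is jointly Lipschitz; consequently $\mathbb{E}[(\mathcal{B}^H(p) - \mathcal{B}^H(q))^2] \leq K |p-q|$ for a constant $K$ uniform in $p,q \in [0,1]$.

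To conclude I would invoke Gaussianity: the fourth central moment of a normal variable is three times the square of its variance, whence
\begin{eqnarray}
\mathbb{E}\left[ \left( \mathcal{B}^H(p) - \mathcal{B}^H(q) \right)^4 \right] = 3 \left( \mathbb{E}\left[ \left( \mathcal{B}^H(p) - \mathcal{B}^H(q) \right)^2 \right] \right)^2 \leq 3 K^2 |p - q|^2 .
\nonumber
\end{eqnarray}
Kolmogorov--Chentsov, applied with exponents $4$ and $1+1$, then produces a modification of $\mathcal{B}^H$ with almost surely continuous trajectories --- in fact H\"{o}lder of any order below $1/4$ --- on $[0,1]$; continuity on a compact interval delivers boundedness for free. (Equivalently, the resulting H\"{o}lder-$\tfrac12$ control of the intrinsic pseudometric makes Dudley's entropy integral converge, yielding bounded, uniformly continuous paths directly.)

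The one genuinely delicate point I expect is the behaviour at the endpoints $p = 0$ and $p = 1$, where $Q(p)$ may vanish or diverge to $+\infty$, so that any estimate framed in the $y$-variable would break down there. This is precisely why the kernel was rewritten, in $(\ref{eq:cov_ker_hajek2})$, entirely in terms of $F(Q(p)) = p$ and the conditional-moment integrals $T_{\pm 1}(Q(p))$, and why assumption C1 is imposed in place of bare moment conditions: these quantities remain bounded and Lipschitz up to and including $p = 0$ and $p = 1$, so the divergence of the quantile function itself never enters the estimate, and the uniform increment bound --- hence continuity on the \emph{closed} interval $[0,1]$ --- holds throughout.
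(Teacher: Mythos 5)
Your proof is correct, and its analytic core coincides with the paper's: both arguments rest on the increment bound $\mathbb{E}[(\mathcal{B}^H(p)-\mathcal{B}^H(q))^2]\leq K\vert p-q\vert$, obtained from the rewritten kernel $( \ref{eq:cov_ker_hajek2} )$ together with assumption C1, which yields $\vert T_{\alpha}(t)-T_{\alpha}(y)\vert\leq M_{\alpha}\vert F(t)-F(y)\vert$ and hence, after the substitution $y=Q(p)$ with $F(Q(p))=p$, a bound linear in $\vert p-q\vert$ that is uniform up to the endpoints. You diverge only in the concluding step: the paper passes from the linear bound to the logarithmic bound $( \ref{eq:disug_kolm_4} )$ and invokes the Gaussian continuity criterion of Leadbetter and Weissner, whereas you use the identity $\mathbb{E}[Z^4]=3(\mathbb{E}[Z^2])^2$ for the centred Gaussian increments to place yourself in the hypotheses of Kolmogorov--Chentsov with exponents $(4,2)$. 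Both finishes are valid; yours is more elementary and self-contained and delivers an explicit H\"{o}lder exponent (any order below $1/4$, improvable to any order below $1/2$ by taking higher Gaussian moments or by the Dudley entropy argument you mention), while the paper's criterion is specific to Gaussian processes and would still work under a much weaker, merely logarithmic, modulus for the increment variance. The one caveat, which you already flag, is that Kolmogorov--Chentsov produces a continuous \emph{modification}; since $\mathcal{B}^H$ is specified only through its covariance, this is exactly what the proposition asserts and matches the paper's implicit use of a suitable version.
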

\begin{proof} See Appendix.
\end{proof}

The process $ \mathcal{B}^H $ shares several properties with the Brownian bridge: it is a Gaussian process with a.s. continuous trajectories, and with
$B^H (0) = B^H (1) =0$ with probability 1.

Next assumption D1 is essentially the same as in  \cite{bhattacharya07}.

\begin{itemize}
\item[D1.] The density $f(y) = F^{\prime} (y)$ exists, is positive, and satisfies the relationships
$(1- F(y))^{1+ \gamma} / f(y) \rightarrow 0$ as $y \rightarrow \infty$ and
$F(y)^\gamma / f(y) \rightarrow 0$ as $y \rightarrow 0$, for some $0 < \gamma < 1$. Furthermore, $\mathbb{E} [ Y_i^{2+ \epsilon} ] < \infty$ for some positive $\epsilon$.
\end{itemize}

Proposition $\ref{continuous_traject}$ and assumption D1 allow one to use the same reasoning as in \cite{bhattacharya07}, and to show that the map $F ( \cdot ) \mapsto G( \cdot )$ is Hadamard differentiable at $F$ tangentially to the space of the trajectories of the limiting process $\mathcal{W}^H ( \cdot )$.
The Hadamard derivative (computed at ``point'' $h$) is equal to:
\begin{eqnarray}
\int_{0}^{p} \frac{h(Q(u))}{f(Q(u))} \, du .
\label{eq:Hadam_genlorenz}
\end{eqnarray}

As an application  of Theorem 20.8 in \cite{vandervaart98}, we are now in a position to obtain the following result.

\begin{proposition}
\label{asympt_hadam}
Under assumptions A1-A6, C1, D1, the following weak convergence results hold:
\begin{eqnarray}
\mathcal{G}^H_N ( \cdot ) & = & \sqrt{n} ( \widehat{G}_H ( \cdot ) - G ( \cdot ))  \stackrel{w}{\rightarrow}
\mathcal{G}^H ( \cdot ) \;\; as \; N \rightarrow \infty , \;\; a.s.-\mathbb{P} ; \label{eq:conv_1} \\
\mathcal{L}^H_N ( \cdot ) & = & \sqrt{n} ( \widehat{L}_H ( \cdot ) - L ( \cdot ))  \stackrel{w}{\rightarrow}
\mathcal{L}^H ( \cdot ) \;\; as \; N \rightarrow \infty  , \;\; a.s.-\mathbb{P} \label{eq:conv_2} .
\end{eqnarray}
\noindent where $\mathcal{G}^H ( \cdot )$, $\mathcal{L}^H ( \cdot )$ are Gaussian processes that can be
represented as:
\begin{eqnarray}
\mathcal{G}^H ( p ) & = & \int_{0}^{p} \frac{\mathcal{W}^H ( Q(u))}{f(Q(u))} \, du , \;\; 0 \leq p \leq 1; \label{eq:proc_g} \\
\mathcal{L}^H ( p ) & = & G(1)^{-1} ( \mathcal{G}^H ( p ) + L(p) \mathcal{G}^H (1 ) )  , \;\; 0 \leq p \leq 1. \label{eq:proc_l}
\end{eqnarray}
\noindent respectively.
\end{proposition}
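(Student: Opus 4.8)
The plan is to obtain both $(\ref{eq:conv_1})$ and $(\ref{eq:conv_2})$ from the functional delta method (Theorem 20.8 in \cite{vandervaart98}), feeding in the weak convergence $\mathcal{W}_N^H(\cdot) \stackrel{w}{\to} \mathcal{W}^H(\cdot)$ of Proposition \ref{asympt_hajek} together with the Hadamard differentiability of the generalized Lorenz map anticipated in $(\ref{eq:Hadam_genlorenz})$. Since Hadamard differentiability is a purely analytic property of the fixed deterministic limit $F$, the delta method preserves the mode of convergence; hence the qualifier ``$a.s.-\mathbb{P}$'' attached to the input process is transported verbatim to the output processes $\mathcal{G}_N^H$ and $\mathcal{L}_N^H$.

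First I would handle the generalized Lorenz curve. Writing $\phi$ for the map $F \mapsto G$, where $G(p) = \int_0^p Q(u)\,du$ and $Q$ is the quantile functional of $F$, the H\'{a}jek estimator satisfies $\widehat{G}_H = \phi(\widehat{F}_H)$. The decisive step is to verify that $\phi$ is Hadamard differentiable at $F$, tangentially to the set of functions $h$ for which $h \circ Q$ is continuous and bounded on $[0,1]$ --- which, by Proposition \ref{continuous_traject}, contains the trajectories of $\mathcal{W}^H$ with probability one --- and that its derivative is $\phi'_F(h)(p) = \int_0^p h(Q(u))/f(Q(u))\,du$. I would establish this by decomposing $\phi$ into the quantile map $F \mapsto Q$ followed by the integration map $Q \mapsto \int_0^{\cdot} Q(u)\,du$, and controlling the difference quotients uniformly in $p \in [0,1]$, following the reasoning of \cite{bhattacharya07}. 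Theorem 20.8 in \cite{vandervaart98} then gives $(\ref{eq:conv_1})$ with limit $\mathcal{G}^H = \phi'_F(\mathcal{W}^H)$, i.e. exactly $(\ref{eq:proc_g})$; as a continuous linear image of the Gaussian process $\mathcal{W}^H$, the limit $\mathcal{G}^H$ is itself Gaussian.

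Next I would pass from $G$ to $L = G/G(1)$ by composing $\phi$ with the normalization map $\psi: G \mapsto G/G(1)$, so that $\widehat{L}_H = \psi(\widehat{G}_H) = (\psi \circ \phi)(\widehat{F}_H)$. The map $\psi$ is Hadamard differentiable at every $G$ with $G(1) = \mu_Y > 0$, its derivative being the routine quotient-rule expression; invoking the chain rule for Hadamard derivatives together with a second application of the delta method then yields $(\ref{eq:conv_2})$ with the limit $\mathcal{L}^H$ of $(\ref{eq:proc_l})$, again Gaussian because it is a continuous linear functional of $\mathcal{G}^H$.

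The step I expect to be the main obstacle is the uniform control of the remainder in the Hadamard differentiability of $\phi$ near the endpoints $u \downarrow 0$ and $u \uparrow 1$, where $Q(u)$ may diverge and $f(Q(u))$ degenerates to zero, so that the integrand $h(Q(u))/f(Q(u))$ threatens to be non-integrable and the difference quotients may fail to converge uniformly. This is precisely where the tail conditions of assumption D1 (and A3) --- namely $(1-F(y))^{1+\gamma}/f(y) \rightarrow 0$, $F(y)^{\gamma}/f(y) \rightarrow 0$, and $\mathbb{E}[Y_i^{2+\epsilon}] < \infty$ --- enter: they bound the tail contributions and make the argument of \cite{bhattacharya07} carry over to the present design-based setting.
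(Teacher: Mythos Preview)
Your proposal is correct and follows essentially the same approach as the paper: invoke Proposition~\ref{continuous_traject} to place the trajectories of $\mathcal{W}^H$ in the appropriate tangent space, use assumption D1 together with the argument of \cite{bhattacharya07} to secure Hadamard differentiability of $F \mapsto G$ with derivative $(\ref{eq:Hadam_genlorenz})$, and then apply the functional delta method (Theorem~20.8 of \cite{vandervaart98}). Your explicit treatment of the passage from $G$ to $L$ via the normalization map and the chain rule is a useful addition that the paper leaves implicit.
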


The process $\mathcal{L}^H ( \cdot )$ is, in a sense, the finite population counterpart of the concentration process studied, in case of {\em i.i.d.} data,
in \cite{goldie77}, \cite{csorgo86}.

The limiting Gaussian processes $W^H ( \cdot )$, $\mathcal{G}^H ( \cdot )$, $\mathcal{L}^H  ( \cdot )$ are quite non-standard. Their
covariance kernels are complicate, and, ever worse, they depend on the unknown quantities $F$, $Q$, $S$, $G$, $L$.
For this reason, in the subsequent section a resampling procedure to approximate the probability law of the processes $\mathcal{G}^H_N ( \cdot)$,
$\mathcal{L}^H_N ( \cdot)$ is developed.

Before ending this section, we note  in passing that from Proposition $\ref{asympt_hadam}$ it is also possible to obtain, virtually with no additional
effort, the limiting distribution of the Gini concentration index
\begin{eqnarray}
R = 1-2 \int_{0}^{1} L(p) \, dp . \label{eq:gini_index}
\end{eqnarray}
\noindent
Consider in fact its estimator
\begin{eqnarray}
\widehat{R}_H = 1-2 \int_{0}^{1} \widehat{L}_H (p) \, dp .
\label{eq:gini_estimator}
\end{eqnarray}
From $( \ref{eq:gini_index} )$ it appears that $R$ is a linear functional of $L( \cdot )$, so that it is Hadamard differentiable. As a consequence of the chain rule for Hadamard derivatives (see, {\em e.g.}, \cite{vandervaart98}), the map $ F \mapsto R$ is Hadamard differentiable, too, with Hadamard derivative (computed at `point'' $h$):
\begin{eqnarray}
-2 \int_{0}^{1} \left \{ \int_{0}^{p} \frac{h(Q(u))}{f(Q(u))} du \right \} dp .
\nonumber
\end{eqnarray}
Taking into account that linear functionals of Gaussian processes possess normal distribution, from Proposition $\ref{asympt_hadam} $ the following result follows.

\begin{proposition}
\label{asympt_R} Under the assumptions of Proposition $\ref{asympt_hadam}$ the limiting distribution, as $n$, $N$ tend to infinity, of
$ \sqrt{n} ( \widehat{R}_H - R )$ can be represented as
\begin{eqnarray}
- 2 \int_{0}^{1} \mathcal{L}^H (p) \, dp . \label{eq:limit_R}
\end{eqnarray}
The probability law of $( \ref{eq:limit_R} )$ turns out to be normal with zero expectation and variance
\begin{eqnarray}
\sigma^2_{R,H} = 4 \int_{0}^{1} \int_{0}^{1} E[ \mathcal{L}^H (u) \mathcal{L}^H (v) ] \, du \, dv .
\label{eq:variance_R}
\end{eqnarray}
\end{proposition}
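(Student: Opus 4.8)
The plan is to use the observation, already recorded before the statement, that the Gini index is a \emph{continuous linear} functional of the Lorenz curve, so that nothing beyond Proposition~\ref{asympt_hadam} is needed. Subtracting $(\ref{eq:gini_index})$ from $(\ref{eq:gini_estimator})$ and pulling $\sqrt{n}$ inside the integral gives
\begin{eqnarray}
\sqrt{n} \, ( \widehat{R}_H - R ) = -2 \int_{0}^{1} \sqrt{n} \, ( \widehat{L}_H (p) - L(p) ) \, dp = -2 \int_{0}^{1} \mathcal{L}^H_N (p) \, dp = \phi ( \mathcal{L}^H_N ) ,
\nonumber
\end{eqnarray}
where $\phi (g) = -2 \int_{0}^{1} g(p) \, dp$ is a bounded linear map. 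The statistic of interest is thus simply the image of the empirical process $\mathcal{L}^H_N ( \cdot )$ under $\phi$, and the argument reduces to a continuous mapping statement (equivalently, since $\phi$ coincides with the Hadamard derivative of the linear map $L \mapsto R$, to the functional delta method invoked in the text).

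The first substantive step is to check that $\phi$ is continuous at the relevant limit. Since $| \phi (g) - \phi (g') | \leq 2 \sup_{0 \leq p \leq 1} | g(p) - g'(p) |$, the map $\phi$ is continuous in the uniform metric; and because Skorokhod convergence to a limit with continuous trajectories is equivalent to uniform convergence, $\phi$ is continuous (as a map on $D[0,1]$) at every trajectory of the limiting process $\mathcal{L}^H ( \cdot )$. By Proposition~\ref{asympt_hadam} we have $\mathcal{L}^H_N ( \cdot ) \stackrel{w}{\to} \mathcal{L}^H ( \cdot )$, and by $(\ref{eq:proc_g})$--$(\ref{eq:proc_l})$ together with Proposition~\ref{continuous_traject} the process $\mathcal{L}^H ( \cdot )$ has a.s.\ continuous (hence bounded) trajectories on $[0,1]$. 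The continuous mapping theorem then delivers
\begin{eqnarray}
\sqrt{n} \, ( \widehat{R}_H - R ) \stackrel{w}{\to} \phi ( \mathcal{L}^H ) = -2 \int_{0}^{1} \mathcal{L}^H (p) \, dp ,
\nonumber
\end{eqnarray}
which is the representation $(\ref{eq:limit_R})$.

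It remains to identify this limiting law. A bounded linear image of a Gaussian process is Gaussian, so $-2 \int_{0}^{1} \mathcal{L}^H (p) \, dp$ is normally distributed. Its mean is $-2 \int_{0}^{1} \mathbb{E} [ \mathcal{L}^H (p) ] \, dp = 0$ since $\mathcal{L}^H$ has zero mean function, and squaring and interchanging expectation with the double integral gives
\begin{eqnarray}
\mathbb{V} \left ( -2 \int_{0}^{1} \mathcal{L}^H (p) \, dp \right ) = 4 \int_{0}^{1} \int_{0}^{1} \mathbb{E} [ \mathcal{L}^H (u) \, \mathcal{L}^H (v) ] \, du \, dv = \sigma^2_{R,H} ,
\nonumber
\end{eqnarray}
which is $(\ref{eq:variance_R})$.

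The only genuinely technical points, where I would concentrate the rigour, are the two interchanges of expectation and integral (Fubini) used for the mean and the variance, and the continuity of $\phi$ at continuous limit points. All three rest on the same fact: the trajectories of $\mathcal{L}^H ( \cdot )$ are continuous on the compact interval $[0,1]$, hence bounded, so that $\sup_{0 \leq p \leq 1} | \mathcal{L}^H (p) |$ is finite and --- being the supremum of a separable Gaussian process with continuous paths --- has finite moments of every order (Fernique's theorem). This integrability legitimizes Fubini. The continuity of the paths of $\mathcal{L}^H$ is itself inherited from the integral representation $(\ref{eq:proc_g})$, in which $\mathcal{G}^H (p)$ appears as the indefinite integral of an a.s.\ locally integrable integrand, the finiteness of that integral being guaranteed by assumption D1 and the bounded continuous paths of $\mathcal{B}^H$ furnished by Proposition~\ref{continuous_traject}.
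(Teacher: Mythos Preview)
Your argument is correct and follows essentially the same route as the paper: the key observation is that $R$ is a bounded linear (hence Hadamard differentiable) functional of $L(\cdot)$, so the result is an immediate consequence of Proposition~\ref{asympt_hadam} via the continuous mapping theorem / functional delta method, together with the fact that a linear functional of a Gaussian process is normal. You are in fact more explicit than the paper about the technical underpinnings (continuity of $\phi$ at continuous limit points, Fubini, and Fernique's theorem for the moment bounds), whereas the paper simply records the Hadamard derivative of $F\mapsto R$ via the chain rule and states the conclusion.
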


\section{The resampling procedure}
\label{sec:resamp}

In the present section we develop a resampling procedure to approximate the distribution of $\widehat{G}_H ( \cdot )$ and
$\widehat{L}_H ( \cdot )$.
The basic requirement is to recover the limiting laws obtained in Propositions \ref{asympt_hajek}, \ref{asympt_hadam}.
In other words, we aim at constructing a resampling procedure that is asymptotically exact.
This is in fact the main justification of classical Efron's bootstrap for {\em i.i.d.} data; see, {\em e.g.}, \cite{bickelfried81}.
Unfortunately, in the present case
classical bootstrap does not work, because of the dependence among units due to the sampling design. This fact is well-known in the literature on sampling
finite populations: cfr. \cite{antal:11}, \cite{chauvet07}, \cite{contimar14} and references therein.

In sampling finite populations several different resampling techniques exist, but none of them possesses a true asymptotic justification. The only
exception is the method developed in \cite{contimar14}, that unfortunately is not suitable in the case under examination, because it assumes the absence of
relationships between the sampling weights $\pi_i^{-1}$ and the values $y_i$s of the variable of interest.

The resampling procedure we consider here has been proposed in \cite{contiorio18}, and exploited in \cite{marvicard18}.
It  is composed by two phases. In the first one,
on the basis of the sampling data a pseudo-population, consisting in a prediction of the ``true'' population, is constructed.
The prediction process is based on the sampling design, and does not essentially involve the superpopulation model. In the second phase, a sample of size $n$ (the same as the ``original'' one) is drawn from the pseudo-population, according to a $\pi ps$ sample design $P^*$ (the {\em resampling design}) with inclusion probabilities appropriately chosen and satisfying the entropy condition A5.

From now on, the following terminology will be used. The {\em sampling design} $P$ is the sampling procedure drawing $n$ units from the ``original''
population $\mathcal{U}_N$. The {\em resampling design} $P^*$ is the sampling procedure drawing $n$ units from the pseudo-population.

\subsection{Pseudo-population}
\label{sec:resampling_1b}

A {\em design-based population predictor} of $\yN$ is
\begin{eqnarray}
\left \{ ( N^{*}_i D_i , \, y_i , \, x_i  ) ; \; i  =1, \, \dots , \, N \right \}
\label{eq:predict_des}
\end{eqnarray}
\noindent where $N^{*}_i$s are integer-valued r.v.s, with (joint) probability distribution $P_{pred}$.
In practice, $( \ref{eq:predict_des} )$ means that
$N^{*}_i D_i$ population units are predicted to have $y$-value equal to $y_i$ and
$x$-value equal to $x_i$, for each sample unit $i$.
In the sequel, the symbols $y^{*}_k$, $x^{*}_k$ will be used to denote the $y$-value and $x$-value of unit $k$
of the pseudo-population, respectively. Of course $N^{*}_i$ units of the pseudo-population satisfy the relationships $y^{*}_k = y_i$,
$x^{*}_k = x_i$, $i \in \s$.

Although several pseudo-populations could be constructed, according to \cite{contiorio18} there is essentially only one pseudo-population
that asymptotically works in a superpopulation perspective: the {\em multinomial pseudo-population}. In a non-asymptotic setting, it goes back to
 \cite{pfeffersver04}.

Consider $N$ independent trials, where trial $k$ ($=1, \, \dots , \, N$) consists in choosing a unit from the original sample $\s$;  unit $i \in \s$ is selected
 with probability $\pi_i^{-1} / \sum_{j \in \s} \pi_j^{-1} = x_i^{-1} / \sum_{j \in \s} x_j^{-1}$.
If at trial $k$ the unit $i \in \s$ is selected, define $y^*_k = y_i$ and $x^*_k = x_i$, $k=1, \, \dots , \, N$. Next,
define a pseudo-population of $N$ units, such that unit $k$ possesses $y$-value $y^*_k$ and $x$-value $x^*_k$, $k=1 \, \dots , \, N$.
Finally, let $N^*_i$, $i \in \s$, be the number of the pseudo-population units equal to unit $i$ of the sample $\s$. Of course, the pseudo-population has size $N$.

\subsection{Resampling scheme}
\label{sec:resampling_1c}

The resampling procedure we consider is described below.

\begin{itemize}
\item[Ph 1.] Generate a pseudo-population $( \ref{eq:predict_des} )$ of $N$ units. Denote by $y^{*}_k$, $x^{*}_k$ the $y$-value and $x$-value of unit $k$ of the pseudo-population, respectively.
\item[Ph 2.]
Draw a sample $\s^*$ of size $n$ from the pseudo-population defined in phase 1, on the basis of a resampling design $P^*$ with first order inclusion probabilities $\pi_k^* = n x^*_k / \sum_{h=1}^{N} x^*_h$ and satisfying assumption A5.
\end{itemize}

Consider now the resampling design, and let $D^{*}_{k} =1$ if the unit $k$ of the pseudo-population is drawn, and $D^{*}_k =0$ otherwise. The H\'{a}jek estimator of
$F^*_N (y)$ is equal to
\begin{eqnarray}
\widehat{F}^*_H (y) = \frac{\sum_{k=1}^{N} \frac{D^{*}_{k}}{\pi^{*}_{k}} I_{( y^{*}_{k} \leq y)}}{\sum_{k=1}^{N} \frac{D^{*}_{k}}{\pi^{*}_{k}}} .
\label{eq:hajek_resampled}
\end{eqnarray}

Next, define the resampled version of the process
$( \ref{eq:empirical_hajek} )$, namely
\begin{eqnarray}
\mathcal{W}_{N}^{H*} (y) = \sqrt{n} ( \widehat{F}^{*}_H (y) - \widehat{F}_H (y) )  , \:\: y \in \mathbb{R} ; \;\; N \geq 1 . \label{eq:empirical_resampled_hajek}
\end{eqnarray}

The main property of the above resampling scheme is its asymptotic correctness.

\begin{proposition}
\label{asympt_ricamp_hajek}
Assume the sampling design $P$ and the resampling design $P^*$ both satisfy  assumptions A1-A6, and that
P1-P3 are fulfilled. Conditionally on $\yN$, $\xN$, $\DN$, $( D_1 N^*_1 , \, \dots , \, D_N N^*_N)$,
the following statements hold.
\begin{itemize}
\item[$PR1$.] The
 sequence $ ( \mathcal{W}_{N}^{H*} ( \cdot) ; \; N \geq 1) $, converges weakly, in $D[ - \infty , \: + \infty ]$ equipped with the Skorokhod topology,
to a Gaussian process
$\mathcal{W}^H ( \cdot ) $ with zero mean function and covariance kernel $( \ref{eq:nucleo_cov} )$.
\item[$PR2$.] If $\phi ( \cdot )$ is  Hadamard differentiable at $F$, then  $( \sqrt{n} ( \phi ( \widehat{F}^{*}_H (y) ) - \phi (
\widehat{F}_H )) ; \; N \geq 1) $
converges weakly to $\phi^{\prime}_F ( \mathcal{W}^H )$, as $N$ increases.
\end{itemize}
In both $PR1$, $RP2$ weak convergence takes place for a set of $y_i$s, $x_i$s having $\mathbb{P}$-probability 1,
and for a set of $\DN$s and $( N^*_1 , \, \dots , \, N^*_N)$ of probability tending to $1$.
\end{proposition}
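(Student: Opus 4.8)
The plan is to reproduce, inside the ``bootstrap world'', the two-term decomposition that underlies Proposition \ref{asympt_hajek}. Writing $F_N^*(y) = N^{-1}\sum_{k=1}^N I_{(y_k^* \leq y)}$ for the distribution function of the multinomial pseudo-population, I would split
\begin{eqnarray}
\mathcal{W}_N^{H*} (y) = \underbrace{\sqrt{n} \left ( \widehat{F}_H^* (y) - F_N^* (y) \right )}_{\mathcal{W}_{1N}^{H*}(y)} + \underbrace{\sqrt{n} \left ( F_N^* (y) - \widehat{F}_H (y) \right )}_{\mathcal{W}_{2N}^{H*}(y)} .
\nonumber
\end{eqnarray}
The first term isolates the variability injected by the resampling design $P^*$ drawing from the pseudo-population, the second the variability injected by the multinomial construction of the pseudo-population itself. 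The structural point is that the pseudo-population plays the role of the ``finite population'' and $\widehat{F}_H$ the role of the ``superpopulation d.f.'': since the $N$ draws defining the pseudo-population are i.i.d. from the law placing mass $\pi_i^{-1}/\sum_{j\in\s}\pi_j^{-1}$ on $y_i$, one has $\mathbb{E}[F_N^*(y)\mid\s]=\widehat{F}_H(y)$, so that $\mathcal{W}_{1N}^{H*}$ and $\mathcal{W}_{2N}^{H*}$ are the exact analogues of $\mathcal{W}_{1N}^H$ and $\mathcal{W}_{2N}^H$ in $(\ref{eq:empirical_hajek1})$--$(\ref{eq:empirical_hajek2})$.

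For $\mathcal{W}_{1N}^{H*}$ I would apply Proposition \ref{asympt_hajek} verbatim, with the pseudo-population in place of $\mathcal{U}_N$ and $P^*$ in place of $P$. Conditions P1-P3 guarantee that $P^*$ is a $\pi ps$ design satisfying A4-A6 on the pseudo-population, while the strong law, applied first to the original sample and then to the multinomial draws, shows that the pseudo-population analogues of $K_{\pm1}$, $T_{\pm1}$, $\mathbb{E}[X_1]$ and $d$ converge to their true superpopulation values, for $\mathbb{P}$-almost all sample sequences and for a set of pseudo-populations of probability tending to $1$. Hence the covariance kernel of $\mathcal{W}_{1N}^{H*}$ converges to $C_1(y,t)$ of $(\ref{eq:cov_ker_hajek})$. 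For $\mathcal{W}_{2N}^{H*}$, conditionally on the original sample the quantities $y_1^*,\dots,y_N^*$ are genuinely i.i.d. with common d.f. $\widehat{F}_H$, so $F_N^*$ is an ordinary empirical d.f.; the (conditional) Donsker theorem gives $\sqrt{N}(F_N^*-\widehat{F}_H)\Rightarrow$ an $\widehat{F}_H$-Brownian bridge, and multiplying by $\sqrt{n/N}\to\sqrt{f}$ together with $\widehat{F}_H\to F$ uniformly yields weak convergence to a Gaussian process with kernel $f\,C_2(y,t)$ of $(\ref{eq:nucleo_brownian})$.

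To complete PR1 I would establish joint weak convergence of $(\mathcal{W}_{1N}^{H*},\mathcal{W}_{2N}^{H*})$ together with their asymptotic independence. Given the pseudo-population the first term is driven purely by the resampling indicators $D_k^*$ and the second is frozen, so a layered argument, conditioning first on the pseudo-population and then integrating over the multinomial counts $N_i^*$, decouples the two layers: the $C_1$-component comes from the resampling and the $f\,C_2$-component from the pseudo-population construction. Summing the kernels gives $C=C_1+f\,C_2$ of $(\ref{eq:nucleo_cov})$, which is exactly the limit of Proposition \ref{asympt_hajek} and hence the desired asymptotic correctness. PR2 then follows with essentially no extra effort from the functional delta method: since $\phi$ is Hadamard differentiable at $F$, $\widehat{F}_H\to F$ a.s.-$\mathbb{P}$, and $\mathcal{W}_N^{H*}\Rightarrow\mathcal{W}^H$ conditionally, the randomly-centred (bootstrap) version of Theorem 20.8 in \cite{vandervaart98} delivers $\sqrt{n}(\phi(\widehat{F}_H^*)-\phi(\widehat{F}_H))\Rightarrow\phi'_F(\mathcal{W}^H)$.

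I expect the main obstacle to be the tightness of $\mathcal{W}_{1N}^{H*}$ in $D[-\infty,+\infty]$ uniformly over a set of pseudo-populations of probability tending to $1$. Unlike in Proposition \ref{asympt_hajek}, where tightness is checked against the fixed superpopulation law, here the ``population'' is itself random and only approximately regular, so one must control the tail behaviour of the pseudo-population (the analogue of A3 and D1) and the oscillation of the resampled process simultaneously, and then verify that the exceptional sets of sample and pseudo-population realizations are asymptotically negligible. The bookkeeping of this double ``a.s.-$\mathbb{P}$ over the sample / probability $\to 1$ over the pseudo-population'' mode of convergence, together with the asymptotic independence of the two layers, is the delicate part; the finite-dimensional convergence is comparatively routine once Proposition \ref{asympt_hajek} is in hand.
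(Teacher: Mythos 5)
The paper does not actually prove Proposition \ref{asympt_ricamp_hajek}: it defers entirely to \cite{contiorio18}, so there is no internal argument to compare yours against line by line. That said, your plan is the natural one and is exactly the route the paper's own architecture points to: you reproduce, at the bootstrap level, the two-term decomposition of Proposition \ref{asympt_hajek}, with the pseudo-population playing the role of the finite population (kernel $C_1$, driven by the high-entropy resampling design $P^*$) and the multinomial construction playing the role of the i.i.d.\ superpopulation draw (kernel $f\,C_2$, via the identity $\mathbb{E}[F_N^*(y)\mid\s]=\widehat{F}_H(y)$ and a conditional Donsker theorem), followed by asymptotic independence of the two layers and the bootstrap functional delta method for $PR2$. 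The substance looks right, and you correctly identify the delicate points. Two caveats worth making explicit if you were to write this out in full: first, Proposition \ref{asympt_hajek} cannot be applied literally ``verbatim'' to the pseudo-population, because its underlying distribution $\widehat{F}_H$ is a discrete step function supported on $n$ points and changes with $N$, so one must rerun the supporting limit lemmas (the convergence of the pseudo-population analogues of $K_{\pm1}$, $T_{\pm1}$, $d$) as a triangular-array argument in the stated double mode of convergence (a.s.\ in the $y_i$s, $x_i$s; probability tending to one in $\DN$ and the $N_i^*$), which is precisely what conditions P1--P3 on the moments of the $N_i^*$ are there to control --- they are conditions on the predictor, not on $P^*$ as you phrase it; second, for $PR2$ the relevant tool is the conditional (bootstrap) delta method rather than Theorem 20.8 itself, since the centring $\widehat{F}_H$ is random. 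Neither point undermines the approach; they are the bookkeeping you already flag as the hard part.
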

\begin{proof}
See \cite{contiorio18}.
\end{proof}

In particular, denote by $\widehat{G}^{*}_{H}$, $\widehat{L}^{*}_{H}$ the estimators of the  generalized Lorenz curve and Lorenz curve, respectively,
based on the sample $\s^*$ drawn from the pseudo-population.
Consider further the resampled processes
\begin{eqnarray}
\mathcal{G}_{N}^{*H} ( \cdot ) = \sqrt{n} ( \widehat{G}^{*}_{H} (p) - \widehat{G}_{H} (p) ) , \;\;\;
\mathcal{L}_{N}^{*H} ( \cdot ) = \sqrt{n} ( \widehat{L}^{*}_{H} (p) - \widehat{L}_{H} (p) ) ; \;\; 0 \leq p \leq 1.
\label{eq:resampled_proc}
\end{eqnarray}
From Propositions $\ref{asympt_hadam}$ and $\ref{asympt_ricamp_hajek}$ it is immediate to obtain the following corollary.

\begin{corollary}
\label{asymp_resampled_lorenz}
Under the assumptions of Propositions $\ref{asympt_hadam}$, $\ref{asympt_ricamp_hajek}$, the resampled processes $( \ref{eq:resampled_proc} )$
asymptotically behave as the processes $\mathcal{G}_N^H ( \cdot )$, $\mathcal{L}_N^H ( \cdot )$ in $( \ref{eq:conv_1} )$,
$( \ref{eq:conv_2} )$. In symbols:
\begin{eqnarray}
\mathcal{G}_{N}^{*H} ( \cdot ) \stackrel{w}{\rightarrow} \mathcal{G}^H ( \cdot ) , \;\;
\mathcal{L}_{N}^{*H} ( \cdot ) \stackrel{w}{\rightarrow} \mathcal{L}^H ( \cdot ) \;\;\; as \; N \rightarrow \infty
\end{eqnarray}
\noindent where $\mathcal{G}^H ( \cdot )$, $\mathcal{L}^H ( \cdot )$ are Gaussian processes defined as in Proposition $\ref{asympt_hadam}$.
\end{corollary}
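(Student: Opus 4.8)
The statement is an instance of the functional delta method for the resampling scheme, already packaged in part $PR2$ of Proposition $\ref{asympt_ricamp_hajek}$. The plan is to apply $PR2$ to the two maps whose Hadamard differentiability was established in the discussion preceding Proposition $\ref{asympt_hadam}$: the generalized Lorenz map $\phi_G : F \mapsto G$ and the Lorenz map $\phi_L : F \mapsto L$. Writing $\widehat{G}^{*}_{H} = \phi_G ( \widehat{F}^{*}_H )$, $\widehat{G}_{H} = \phi_G ( \widehat{F}_H )$ and similarly for $\widehat{L}^{*}_{H}$, $\widehat{L}_{H}$, the two resampled processes in $( \ref{eq:resampled_proc} )$ are exactly $\sqrt{n} ( \phi_G ( \widehat{F}^{*}_H ) - \phi_G ( \widehat{F}_H ) )$ and $\sqrt{n} ( \phi_L ( \widehat{F}^{*}_H ) - \phi_L ( \widehat{F}_H ) )$, so the corollary reduces to identifying the relevant Hadamard derivatives evaluated at the limit process $\mathcal{W}^H$.

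First I would record that $\phi_G$ is Hadamard differentiable at $F$, tangentially to the trajectory space of $\mathcal{W}^H$, with derivative given by $( \ref{eq:Hadam_genlorenz} )$; this is precisely the differentiability exploited in Proposition $\ref{asympt_hadam}$, and it rests only on Proposition $\ref{continuous_traject}$ together with assumption D1. For the Lorenz map I would write $\phi_L = \nu \circ \phi_G$, where $\nu : G \mapsto G / G(1)$ is normalization by the endpoint value. Since $G(1) = \mu_Y > 0$ under A2--A3, the map $\nu$ is Hadamard differentiable, and the chain rule for Hadamard derivatives (\cite{vandervaart98}) yields Hadamard differentiability of $\phi_L$, with derivative whose value at a direction $h$ reproduces the expression entering $( \ref{eq:proc_l} )$.

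Next I would invoke $PR2$. The crucial point is that, by $PR1$, the resampled empirical process $\mathcal{W}_N^{H*}$ converges weakly --- conditionally, for a set of $y_i$s, $x_i$s of $\mathbb{P}$-probability $1$ and a set of designs and pseudo-population counts of probability tending to $1$ --- to the \emph{same} Gaussian limit $\mathcal{W}^H$ that governs the original process in Proposition $\ref{asympt_hajek}$. Applying $PR2$ first with $\phi = \phi_G$ and then with $\phi = \phi_L$ therefore gives
\begin{eqnarray}
\mathcal{G}_{N}^{*H} ( \cdot ) \stackrel{w}{\rightarrow} ( \phi_G )^{\prime}_F ( \mathcal{W}^H ) = \mathcal{G}^H ( \cdot ) , \;\;
\mathcal{L}_{N}^{*H} ( \cdot ) \stackrel{w}{\rightarrow} ( \phi_L )^{\prime}_F ( \mathcal{W}^H ) = \mathcal{L}^H ( \cdot ) ,
\nonumber
\end{eqnarray}
the last equalities holding because the Hadamard derivatives of $\phi_G$, $\phi_L$ evaluated at $\mathcal{W}^H$ are, by $( \ref{eq:proc_g} )$ and $( \ref{eq:proc_l} )$, the very processes $\mathcal{G}^H$, $\mathcal{L}^H$ appearing in Proposition $\ref{asympt_hadam}$. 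This matching of limits is the whole content of the corollary: the resampled Lorenz processes converge to the same Gaussian laws as the original ones precisely because resampling reproduces $\mathcal{W}^H$ at the level of the distribution function.

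I do not expect a serious obstacle, since the analytic work (Hadamard differentiability, continuity of the limiting trajectories) was already carried out for Proposition $\ref{asympt_hadam}$ and the delta method for resampling is supplied by $PR2$. The only points demanding care are of a bookkeeping nature: verifying that $\nu$ is Hadamard differentiable at $G$, which needs $G(1) \neq 0$ (guaranteed by $Y_i \geq 0$ with positive mean) so that the chain rule applies, and checking that the \emph{conditional} mode of weak convergence in $PR1$--$PR2$ is inherited unchanged, i.e. that all statements hold on the same almost-sure superpopulation event and with design/pseudo-population probability tending to one. If a joint statement for $( \mathcal{G}_{N}^{*H} , \mathcal{L}_{N}^{*H} )$ is desired rather than the two marginal ones, I would simply apply $PR2$ to the single map $F \mapsto ( G , L )$ into the product space, whose Hadamard derivative is the pair of the individual derivatives.
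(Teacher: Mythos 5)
Your argument is exactly the one the paper intends: the corollary is stated as an immediate consequence of Propositions \ref{asympt_hadam} and \ref{asympt_ricamp_hajek}, i.e.\ of applying part $PR2$ to the (generalized) Lorenz maps whose Hadamard differentiability was established before Proposition \ref{asympt_hadam}. Your write-up simply makes explicit the bookkeeping (chain rule for the normalization, conditional mode of convergence) that the paper leaves implicit, so it is correct and follows the same route.
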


Corollary $\ref{asymp_resampled_lorenz}$ provides  an
approximation scheme for the probability laws of $\widehat{G}_H (\cdot )$, $\widehat{L}_H (\cdot )$.

As a by-product, a similar result for Gini concentration ratio can be obtained. Let
\begin{eqnarray}
\widehat{R}^*_H = 1 -2 \int_{0}^{1} \widehat{L}^{*}_{H} (p) \, dp
\label{eq:res_r}
\end{eqnarray}
\noindent be the resampled version of the Gini concentration ratio.
\begin{corollary}
\label{asymp_resampled_R}
Under the assumptions of Propositions $\ref{asympt_hadam}$, $\ref{asympt_ricamp_hajek}$, $\sqrt{n} ( \widehat{R}^*_H - \widehat{R}_H )$, as $n$, $N$ go to infinity, tends in distribution to a normal variate with zero mean and variance $\sigma^2_{R,H}$  $( \ref{eq:variance_R} )$.
\end{corollary}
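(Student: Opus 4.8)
The plan is to write the rescaled resampled Gini ratio as a continuous linear functional of the resampled Lorenz process, and then to transport the weak convergence already established for that process through this functional.

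First I would observe that, directly from the definitions $(\ref{eq:res_r})$ and $(\ref{eq:gini_estimator})$, the additive constant $1$ cancels, so that
\begin{eqnarray}
\sqrt{n} ( \widehat{R}^*_H - \widehat{R}_H ) = -2 \int_{0}^{1} \sqrt{n} ( \widehat{L}^{*}_{H} (p) - \widehat{L}_{H} (p) ) \, dp = -2 \int_{0}^{1} \mathcal{L}_{N}^{*H} (p) \, dp .
\nonumber
\end{eqnarray}
Thus the quantity of interest is nothing but the image of the resampled process $\mathcal{L}_{N}^{*H} ( \cdot )$ under the map $\Psi : \ell \mapsto -2 \int_0^1 \ell (p) \, dp$.

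Next I would recall from Corollary $\ref{asymp_resampled_lorenz}$ that, conditionally on $\yN$, $\xN$, $\DN$, $(D_1 N^*_1 , \dots , D_N N^*_N)$, the sequence $\mathcal{L}_{N}^{*H} ( \cdot )$ converges weakly in $D[0,1]$ (Skorokhod topology) to $\mathcal{L}^H ( \cdot )$, for a set of data of $\mathbb{P}$-probability tending to $1$. The functional $\Psi$ is Lipschitz with respect to the supremum norm, since $| \Psi ( \ell_1 ) - \Psi ( \ell_2 ) | \leq 2 \sup_{p} | \ell_1 (p) - \ell_2 (p) |$; moreover, on the subset of $D[0,1]$ consisting of continuous trajectories, Skorokhod convergence implies uniform convergence, whence $\Psi$ is continuous at every continuous path. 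By Proposition $\ref{continuous_traject}$ the limit $\mathcal{L}^H ( \cdot )$ has continuous trajectories with probability one, so $\Psi$ is continuous on a set of full measure under the limit law. The continuous mapping theorem (which preserves conditional weak convergence) then yields
\begin{eqnarray}
-2 \int_{0}^{1} \mathcal{L}_{N}^{*H} (p) \, dp \stackrel{d}{\rightarrow} -2 \int_{0}^{1} \mathcal{L}^{H} (p) \, dp \;\; \mathrm{as} \; N \rightarrow \infty .
\nonumber
\end{eqnarray}

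Finally I would identify the limiting law: by Proposition $\ref{asympt_R}$, the random variable $-2 \int_0^1 \mathcal{L}^H (p) \, dp$ is normal with zero expectation and variance $\sigma^2_{R,H}$ as in $(\ref{eq:variance_R})$, which gives the assertion. The one genuinely delicate point will be the conditional nature of the convergence in Corollary $\ref{asymp_resampled_lorenz}$: the continuous mapping theorem must be applied to the conditional laws, for the (almost all) data sequences along which the conditional weak convergence of $\mathcal{L}_{N}^{*H}$ holds, so that the resulting convergence is again of the conditional, ``for almost all data'' type. Alternatively, the same conclusion follows in one step from part $PR2$ of Proposition $\ref{asympt_ricamp_hajek}$ applied to the Hadamard-differentiable functional $F \mapsto R$, whose derivative was computed just before Proposition $\ref{asympt_R}$.
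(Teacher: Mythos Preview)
Your proof is correct and follows essentially the same approach the paper intends: the corollary is stated as an immediate by-product of Corollary~\ref{asymp_resampled_lorenz} (equivalently, of part $PR2$ of Proposition~\ref{asympt_ricamp_hajek} applied to the Hadamard-differentiable map $F \mapsto R$), and your argument spells out precisely this. The alternative one-step route via $PR2$ that you mention at the end is in fact the most direct formalization of what the paper has in mind.
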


\section{Construction of fixed-size confidence bands}
\label{sec:confidence_band}

Corollary  $\ref{asymp_resampled_lorenz}$ essentially offers a resampling scheme enabling one to approximate the actual (design-based)
distribution of the estimator ${L}^H ( \cdot )$ of the finite population Lorenz curve.
In particular, we focus here on the construction of a confidence band for the superpopulation Lorenz curve,
$L ( \cdot )$. For the sake of simplicity, we confine ourselves on fixed-size confidence bands. Let $d_{1- \alpha}$ be the $(1- \alpha)$-quantile
of the distribution of
\begin{eqnarray}
\sup_{0 \leq p \leq 1} \left \vert \mathcal{L} (p) \right \vert .
\label{eq:sup_abs}
\end{eqnarray}
\noindent If the covariance kernel of the Gaussian process $\mathcal{L} ( \cdot )$ is non-singular, then the r.v. $( \ref{eq:sup_abs} )$ is absolutely continuous with
strictly positive density (cfr. \cite{lifs82}), so that the equality
\begin{eqnarray}
Pr \left ( \sup_{0 \leq p \leq 1} \left \vert \mathcal{L} (p) \right \vert \leq d_{1- \alpha} \right ) = 1- \alpha .
\label{eq:equal_quant}
\end{eqnarray}
\noindent holds.
Relationship $( \ref{eq:equal_quant} )$, in its turn, implies that the region
\begin{eqnarray}
\left [ \max \left ( 0, \,  \widehat{L}_H (p) - \frac{d_{1- \alpha}}{\sqrt{n}} \right ) , \;
\min \left ( 1, \, \widehat{L}_H (p) + \frac{d_{1- \alpha}}{\sqrt{n}} \right ) ; \;\;
0 \leq p \leq 1 \right ]
\label{eq:band_1}
\end{eqnarray}
\noindent is a confidence band for the whole Lorenz curve $L_N ( \cdot )$ of asymptotic level $1- \alpha$.

The quantile $d_{1- \alpha}$ appearing in $( \ref{eq:band_1} )$ obviously depends on the law of the process $\mathcal{L} ( \cdot )$,
and hence cannot be computed in practice. The resampling scheme of Section \ref{sec:resamp} offers the following, simple procedure for its
approximate evaluation.

\begin{itemize}
\item[1] Generate $M$ independent samples of size $n$ on the basis of the two-phase procedure described above.
\item[2] For each generated sample,  compute the corresponding Hajek estimator $( \ref{eq:hajek_resampled} )$. They will be denoted by $\widehat{F}^{*}_{H,m} (y)$, $m=1, \, \dots , \, M$.
\item[3]Compute the corresponding estimates of the Lorenz curve:
\begin{eqnarray}
\widehat{L}^{*}_{m} (p) = \widehat{G}^{*}_{m} (p) / \widehat{G}^{*}_{m} (1) , \;\; 0 \leq p \leq 1 ;
 \;\; m=1, \, \dots , \, M.
\nonumber
\end{eqnarray}
\noindent with
\begin{eqnarray}
\widehat{G}^{*}_{m} (p) = \int_{0}^{p} \inf \{ y: \widehat{F}^{*}_{H,m} (y) \geq u \} \, du .
\nonumber
\end{eqnarray}
\item[4] Compute the $M$ quantities
\begin{eqnarray}
Z^{*}_{n,m} = \sqrt{n} \sup_{0 \leq p \leq 1} \left \vert \widehat{L}^{*}_{m} (p) - \widehat{L}_{H} (p) \right \vert
; \;\;
m=1, \, \dots , \, M .
\label{eq:def_z}
\end{eqnarray}
\end{itemize}

Denote further by
\begin{eqnarray}
\widehat{T}^{*}_{n,M} (z) = \frac{1}{M} \sum_{m=1}^{M} I_{( Z^{*}_{n,m} \leq z )} , \;\; z \in \mathbb{R}
\label{eq:resampling_edf}
\end{eqnarray}
\noindent the empirical distribution function of $Z^{*}_{n,m}$s, and by
\begin{eqnarray}
\widehat{T}^{* -1}_{n,M} ( u) = \inf \{ z : \; \widehat{T}^{*}_{n,M} (z) \geq u \} , \;\; 0 <u<1
\label{eq:resampling_quantiles}
\end{eqnarray}
\noindent the corresponding $u$-quantile.

The empirical d.f. $( \ref{eq:resampling_edf} )$ is essentially an approximation of the (resampling)
distribution of $( \ref{eq:sup_abs} )$.

In Proposition  $\ref{prop_resampling_comp} $
it is stated that $( \ref{eq:resampling_edf} )$
converges to the d.f. of $( \ref{eq:sup_abs} )$, and
that a similar result holds for the quantiles $( \ref{eq:resampling_quantiles} )$. Proof is in Appendix.

\begin{proposition}
\label{prop_resampling_comp}
For almost all $y_i$s, $x_i$s values, and in probability w.r.t. $\DN$, $( N^*_1 , \, \dots , \, N^*_N )$,
conditionally on $\yN$, $\xN$, $\DN$, $( N^*_1 , \, \dots , \, N^*_N )$,
the following results hold:
\begin{eqnarray}
\, & \, & \sup_{z} \left \vert \widehat{T}^{*}_{n,M} (z) - Pr \left ( \sup_{0 \leq p \leq 1} \left \vert \mathcal{L} (p) \right \vert
\leq z \right ) \right \vert \stackrel{a.s.}{\rightarrow} 0 \,
; \label{eq:risult_1} \\
\, & \, & \widehat{T}^{*-1}_{n,M} (u) \stackrel{a.s.}{\rightarrow} d_u , \;\; \forall \,
0<p<1 \label{eq:risult_2}
\end{eqnarray}
\noindent as $M$, $N$ go to infinity.
\end{proposition}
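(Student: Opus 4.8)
The plan is to separate the two sources of randomness — the Monte Carlo replication index $m=1,\dots,M$ and the population size $N$ — and to recombine the resulting estimates through a triangle inequality. Throughout, the process $\mathcal{L}(\cdot)$ in the statement is identified with the limiting process $\mathcal{L}^H(\cdot)$ of Proposition \ref{asympt_hadam}, and I write $F_{\mathcal{L}}(z) = Pr(\sup_{0\leq p\leq 1}|\mathcal{L}(p)|\leq z)$ for its distribution function and $T^*_n(z)$ for the exact resampling distribution function of $Z^*_n = \sqrt{n}\sup_{0\leq p \leq 1}|\widehat{L}^*_H(p)-\widehat{L}_H(p)|$.

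First I would note that, conditionally on $\yN$, $\xN$, $\DN$, $(N^*_1,\dots,N^*_N)$, the quantities $Z^*_{n,1},\dots,Z^*_{n,M}$ defined in $(\ref{eq:def_z})$ are i.i.d. copies of $Z^*_n$, and that $Z^*_n = \sup_{0\leq p\leq1}|\mathcal{L}_N^{*H}(p)|$ is the sup-norm of the resampled process $\mathcal{L}_N^{*H}(\cdot)$ of $(\ref{eq:resampled_proc})$. The functional $\phi \mapsto \sup_{0\leq p\leq 1}|\phi(p)|$ is continuous on $C[0,1]$, and by Proposition \ref{continuous_traject} the limit process $\mathcal{L}^H$ has almost surely continuous trajectories, so this functional is continuous at the limit. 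Applying the continuous mapping theorem to the conditional weak convergence $\mathcal{L}_N^{*H} \stackrel{w}{\rightarrow} \mathcal{L}^H$ of Corollary \ref{asymp_resampled_lorenz} then gives, conditionally on the data,
\[
Z^*_n \stackrel{d}{\rightarrow} \sup_{0\leq p\leq 1}|\mathcal{L}(p)| \quad \mbox{as } N\to\infty ,
\]
that is, $T^*_n \Rightarrow F_{\mathcal{L}}$ weakly.

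Next I would upgrade this to uniform convergence. Because the covariance kernel of $\mathcal{L}(\cdot)$ is non-singular, the result of \cite{lifs82} invoked in Section \ref{sec:confidence_band} guarantees that $F_{\mathcal{L}}$ is absolutely continuous with strictly positive density, hence continuous and strictly increasing; P\'olya's theorem therefore turns $T^*_n \Rightarrow F_{\mathcal{L}}$ into
\[
\sup_z |T^*_n(z) - F_{\mathcal{L}}(z)| \to 0 \quad \mbox{as } N\to\infty .
\]
For the replication part, fix $N$ and the data: since the $Z^*_{n,m}$ are conditionally i.i.d. with distribution function $T^*_n$, the Glivenko-Cantelli theorem yields $\sup_z|\widehat{T}^*_{n,M}(z)-T^*_n(z)|\stackrel{a.s.}{\rightarrow}0$ as $M\to\infty$. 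Combining the two bounds through
\[
\sup_z|\widehat{T}^*_{n,M}(z)-F_{\mathcal{L}}(z)| \leq \sup_z|\widehat{T}^*_{n,M}(z)-T^*_n(z)| + \sup_z|T^*_n(z)-F_{\mathcal{L}}(z)|
\]
delivers $(\ref{eq:risult_1})$ as $M,N\to\infty$. The modes of convergence — almost surely in the superpopulation values $y_i$, $x_i$ and in probability in $\DN$, $(N^*_1,\dots,N^*_N)$ — are inherited from Corollary \ref{asymp_resampled_lorenz} and Proposition \ref{asympt_ricamp_hajek}, since the continuous-mapping and P\'olya steps are deterministic given that convergence, whereas Glivenko-Cantelli supplies the almost-sure statement over the resampling draws. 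Statement $(\ref{eq:risult_2})$ then follows from $(\ref{eq:risult_1})$ by the standard fact that uniform convergence of distribution functions to a limit that is continuous and strictly increasing at $d_u$ forces convergence of the associated quantiles (Lemma 21.2 in \cite{vandervaart98}); the strictly positive density of $\sup_{0\leq p\leq 1}|\mathcal{L}(p)|$ provides exactly the required strict monotonicity at each $d_u$.

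The main obstacle is the coherent handling of the double limit $M,N\to\infty$ under the two distinct modes of convergence. The triangle-inequality decoupling is only valid once one checks that the $\mathbb{P}$-probability-one set in the $(y_i,x_i)$, the probability-tending-to-one set in $(\DN,(N^*_1,\dots,N^*_N))$ coming from Corollary \ref{asymp_resampled_lorenz}, and the almost-sure Glivenko-Cantelli set can be intersected without loss, and that the order of the two limits is immaterial (equivalently, that one may let $M=M_N\to\infty$ jointly with $N$). The one step requiring genuine care is the conditional continuous-mapping argument: one must verify that weak convergence of $\mathcal{L}_N^{*H}$ in the Skorokhod topology passes through the sup-norm functional conditionally on the data, which is exactly where the almost-sure continuity of the limiting trajectories from Proposition \ref{continuous_traject} must be used explicitly.
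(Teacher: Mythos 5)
Your decomposition is exactly the paper's: split the error into a Monte Carlo term $\sup_z|\widehat{T}^{*}_{n,M}(z)-T^{*}_{n}(z)|$ over the $M$ replicates and a resampling-approximation term $\sup_z|T^{*}_{n}(z)-F_{\mathcal L}(z)|$, handle the latter by the weak convergence of the resampled process together with the absolute continuity of $\sup_p|\mathcal L(p)|$ from \cite{lifs82}, and deduce the quantile statement $(\ref{eq:risult_2})$ from uniform convergence plus the strictly positive density. The one substantive difference is the tool for the Monte Carlo term: you invoke the Glivenko--Cantelli theorem at fixed $N$, whereas the paper applies the Dvoretzky--Kiefer--Wolfowitz inequality in Massart's form, $Pr(\sup_z|\widehat{T}^{*}_{n,M}(z)-T^{*}_{n}(z)|>\epsilon\mid\cdot)\le 2\exp\{-2M\epsilon^2\}$, followed by the first Borel--Cantelli lemma. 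This is not a cosmetic choice: it is precisely what closes the issue you flag but leave open at the end, namely the coherence of the joint limit $M,N\to\infty$. Glivenko--Cantelli is a statement about i.i.d.\ draws from a \emph{fixed} distribution, and here the underlying d.f.\ $T^{*}_{n}$ changes with $N$, so the almost-sure null set and the rate both depend on $N$ and the iterated limit does not automatically become a joint one. The DKW bound is distribution-free, hence uniform in $N$, and Borel--Cantelli then gives almost-sure convergence along any array with $M=M_N\to\infty$ (fast enough for summability), which is the uniformity your argument needs. If you replace your Glivenko--Cantelli step by the DKW--Borel--Cantelli step, your proof coincides with the paper's; your treatment of the second term via the continuous mapping theorem and P\'olya's theorem is in fact spelled out in more detail than the paper's one-line assertion that $R^{*}_{n}$ converges uniformly to the limit d.f.
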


As a consequence of Proposition $\ref{prop_resampling_comp}$, the region
\begin{eqnarray}
\, & \, &
\left [ \widehat{L}_H (p) - n^{-1/2} T^{* -1}_{n,M} (1- \alpha ) , \;
 \widehat{L}_H (p) +  n^{-1/2} T^{* -1}_{n,M} (1- \alpha ) ; \;\; 0 \leq p \leq 1
\right ] \label{conf_band_percentile}
\end{eqnarray}
\noindent is a confidence band for $L ( \cdot )$ with asymptotic level $1- \alpha$ as $N$ and $M$ increase.

\section{Gini concentration index}
\label{sec:conc_index}

The results of the above Section also allow us to construct confidence intervals of asymptotic level $1- \alpha$ for the Gini concentration index $R$. Using the
same notation as in Section $\ref{sec:confidence_band}$, let $\widehat{R}^{*}_{H, m}$, $m=1, \, \dots , M$ the $M$ replicates of $\widehat{R}_H$ generated according to the resampling procedure of Section $\ref{sec:confidence_band}$, and let
\begin{eqnarray}
Z^*_{n,m} = \sqrt{n} ( \widehat{R}^{*}_{H, m} - \widehat{R}_{H} ) , \;\; m=1, \, \dots , \, M.
\nonumber
\end{eqnarray}
Let further
\begin{eqnarray}
\widehat{T}^{*}_{n,M} (z) & = & \frac{1}{M} \sum_{m=1}^{M} I_{( Z^{*}_{n,m} \leq z )} , \;\; z \in \mathbb{R}
\label{eq:resampling_R} \\
\widehat{T}^{* -1}_{n,M} ( u) & = & \inf \{ z : \; \widehat{T}^{*}_{n,M} (z) \geq u \} , \;\; 0 <u<1 \\
\label{eq:resampling_quantiles_R}
\widehat{V}^{*} & = & \frac{1}{M-1} \sum_{m=1}^{M} \left ( Z^{*}_{n,m} - \overline{Z}^{*}_M  \right )^2
\label{eq:resamp_var}
\end{eqnarray}
\noindent
where
\begin{eqnarray}
\overline{Z}^{*}_M  = \frac{1}{M} \sum_{m=1}^{M } Z^{*}_{n,m} . \nonumber
\end{eqnarray}

Denote now by $\Phi_{\mu, \sigma^{2}} ( \cdot )$ the d.f. of a normal distribution with expectation $\mu$ and variance $\sigma^2$.
\begin{proposition}
\label{prop_rasampling_R}
Under the assumptions of Proposition
$\ref{prop_resampling_comp}$, the following results hold:
\begin{eqnarray}
\, & \, & \sup_{z} \left \vert \widehat{T}^{*}_{n,M} (z) - \Phi_{0, \sigma^2_{R,H}} (z) \right \vert \stackrel{a.s.}{\rightarrow} 0 \,
; \label{eq:risult_R1} \\
\, & \, & \widehat{T}^{*-1}_{n,M} (p) \stackrel{a.s.}{\rightarrow} \Phi_{0, \sigma^{2}_{R,H}}^{-1} (p) , \;\; \forall \,
0<p<1 \label{eq:risult_R2}
\end{eqnarray}
\noindent as $M$, $N$ go to infinity.
If, in addition,  the sequence $\left ( Z^{*}_{m} - \overline{Z}^{*}_M  \right )^2 $ is dominated by a r.v. $U$  with finite expectation, then
\begin{eqnarray}
\widehat{S}^{2*} \rightarrow \sigma^2_{R,H} \;\; {\mathrm{as}} \; M, \, N, \, n  \rightarrow \infty
\label{eq:risult_3R}
\end{eqnarray}
\noindent where convergence in $( \ref{eq:risult_3R} )$ is in probability w.r.t. resampling
replications.
\end{proposition}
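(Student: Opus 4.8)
The plan is to treat the three claims \eqref{eq:risult_R1}--\eqref{eq:risult_3R} as consequences of the weak convergence already established in Corollary \ref{asymp_resampled_R}, together with standard empirical-process and uniform-integrability arguments. First I would recall the key structural fact: by Corollary \ref{asymp_resampled_R}, conditionally on $\yN$, $\xN$, $\DN$, $(N^*_1 , \, \dots , \, N^*_N)$, the resampling statistic $Z^*_{n,m} = \sqrt{n} ( \widehat{R}^*_{H,m} - \widehat{R}_H )$ converges in distribution (as $n$, $N \rightarrow \infty$) to a normal variate with zero mean and variance $\sigma^2_{R,H}$, for almost all $y_i$, $x_i$ values and in probability w.r.t. $\DN$, $(N^*_1 , \, \dots , \, N^*_N)$. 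Thus the single resampling draw has limiting law $\Phi_{0, \sigma^2_{R,H}}$, and the whole proposition is about how the Monte Carlo empirical d.f. built from $M$ i.i.d.\ (conditionally on the pseudo-population) replicates $Z^*_{n,1}, \, \dots , \, Z^*_{n,M}$ tracks this limit as both $M$ and $N$ grow.

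For \eqref{eq:risult_R1} I would argue in two stages. Fix the pseudo-population; then $Z^*_{n,1}, \, \dots , \, Z^*_{n,M}$ are i.i.d.\ draws from the exact (finite-$n$) resampling law, call its d.f.\ $T^*_n (z)$. By the Glivenko--Cantelli theorem, $\sup_z | \widehat{T}^*_{n,M} (z) - T^*_n (z) | \stackrel{a.s.}{\rightarrow} 0$ as $M \rightarrow \infty$. Separately, Corollary \ref{asymp_resampled_R} gives $T^*_n (z) \rightarrow \Phi_{0, \sigma^2_{R,H}} (z)$ pointwise as $N \rightarrow \infty$; since the limit is a continuous d.f., P\'{o}lya's theorem upgrades this to uniform convergence, $\sup_z | T^*_n (z) - \Phi_{0, \sigma^2_{R,H}} (z) | \rightarrow 0$. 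Combining the two via the triangle inequality yields \eqref{eq:risult_R1} as $M$, $N \rightarrow \infty$. Claim \eqref{eq:risult_R2} on the quantiles then follows from \eqref{eq:risult_R1} by the standard inverse-function argument: uniform convergence of d.f.s to a continuous, strictly increasing limit forces convergence of the associated quantile functions at every continuity point $p$, and here $\Phi_{0, \sigma^2_{R,H}}^{-1}$ is continuous throughout $(0,1)$ since the limit is Gaussian with $\sigma^2_{R,H} > 0$.

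The convergence \eqref{eq:risult_3R} of the resampling variance is the part requiring the extra hypothesis, and I expect it to be the main obstacle. The natural route is to show that the second moment converges: $\mathbb{E}^* [ (Z^*_{n,m})^2 ] \rightarrow \sigma^2_{R,H}$ and $\mathbb{E}^* [ Z^*_{n,m} ] \rightarrow 0$, where $\mathbb{E}^*$ denotes expectation under the resampling law, so that $\widehat{V}^*$ (the sample variance over the $M$ replicates) converges to $\sigma^2_{R,H}$ by the law of large numbers. Convergence in distribution alone does \emph{not} deliver convergence of moments; this is exactly why the stated domination assumption---that $(Z^*_m - \overline{Z}^*_M )^2$ is dominated by an integrable $U$---is invoked. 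With that domination, uniform integrability holds, and weak convergence plus uniform integrability yields convergence of the first two moments (e.g.\ by the Vitali/Skorokhod representation argument). Feeding these into the decomposition
\begin{eqnarray}
\widehat{V}^* = \frac{1}{M-1} \sum_{m=1}^{M} (Z^*_{n,m})^2 - \frac{M}{M-1} (\overline{Z}^*_M )^2
\nonumber
\end{eqnarray}
and letting first $M \rightarrow \infty$ (weak law of large numbers, conditionally on the pseudo-population) and then $N$, $n \rightarrow \infty$, gives \eqref{eq:risult_3R}. The delicate point to verify carefully is that the domination can be taken uniformly in $n$, so that the uniform-integrability step is legitimate as $n$ grows rather than at a single fixed $n$; this is where I would spend the most care, since the second moment of a $\sqrt{n}$-scaled functional of the Lorenz estimator is not automatically bounded without tail control, which is precisely what assumptions A3 and D1 on the behaviour of $f$ near $0$ and $\infty$ are there to supply.
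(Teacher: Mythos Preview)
Your proposal is correct and follows essentially the same route as the paper. The only differences are cosmetic: for \eqref{eq:risult_R1}--\eqref{eq:risult_R2} the paper invokes the Dvoretzky--Kiefer--Wolfowitz inequality plus Borel--Cantelli (as in the proof of Proposition~\ref{prop_resampling_comp}) rather than Glivenko--Cantelli plus P\'{o}lya, and for \eqref{eq:risult_3R} the paper simply cites Theorem~2.5.5 of Sen and Singer (1993), which packages exactly the uniform-integrability-plus-weak-convergence argument you spell out.
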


The main  consequences of Proposition $\ref{prop_rasampling_R}$ are two. First of all, {\em the estimator $\widehat{S}^{2*} $
is a consistent estimator of the variance of $ \widehat{R}_H $}; variance estimation based on linearization techniques is dealt with, for instance, in \cite{barperri16}. In the second place, {\em the confidence intervals}
\begin{eqnarray}
\, & \, &
\left [ \widehat{R}_H - n^{-1/2} T^{* -1}_{n,M} (1- \alpha /2 ) , \;
\widehat{R}_H - n^{-1/2} T^{* -1}_{n,M} ( \alpha /2 )
\right ] \label{conf_int_percentile} \\
\, & \, &
\left [ \widehat{R}_H - n^{-1/2}  z_{\alpha /2} \widehat{S}^{*} , \;
\widehat{R}_H + n^{-1/2}  z_{\alpha /2} \widehat{S}^{*} \right ]
\label{eq:conf_int_norm}
\end{eqnarray}
\noindent {\em both possess asymptotic  confidence level $1- \alpha$} as $N$, $n$ and $M$ increase.

\section{Testing for Lorenz dominance}
\label{sec:lorenz_dominance}

Consider two finite populations  ${\mathcal U}_{N_{h}}$ of size $N_h$, $h=1, \, 2$. In the sequel, we will essentially use the same notation as in Section
$\ref{sec:problem}$, with the addition of the suffix $h$ ($=1, \, 2$).
Denote by
$$
L_{N_{h}} ( p) , \;\; 0 \leq p \leq 1
$$
the Lorenz curve for population ${\mathcal U}_{N_{h}}$, $h=1$, $2$.
The population ${\mathcal U}_{N_{1}}$ (weakly) {\em Lorenz dominates}  ${\mathcal U}_{N_{2}}$
if
\begin{eqnarray}
L_{2} (p) \leq L_{1} (p) \;\;  \forall \, 0 \leq p \leq 1 \nonumber
\end{eqnarray}
\noindent {\em i.e.} if
\begin{eqnarray}
\phi (p) \geq 0 \;\;  \forall \, 0 \leq p \leq 1 \label{eq:lorenz_dom}
\end{eqnarray}
\noindent where
\begin{eqnarray}
\phi (p) = L_{1} (p) - L_{2} (p) . \nonumber
\end{eqnarray}
The goal of the present section is to construct a test for the Lorenz dominance hypothesis
\begin{itemize}
\item[$H_{0}$:] $\; \phi (p) \geq 0 \;\; \forall \, 0 \leq p \leq 1.$
\item[$H_{1}$:] $\; \phi (p) < 0 \;\; {\mathrm{for \; some \;}} p \in [0, \, 1]$.
\end{itemize}
The importance of the Lorenz ordering, and  testing for Lorenz dominance is stressed, for instance, in \cite{anderson96}, \cite{barrett14}.

From population ${\mathcal U}_{N_{h}}$ a sample of size $n_h$ is drawn, according to a sampling design $P_h$ satisfying assumptions $A4$-$A6$. Assume further that all r.v.s $Y_{h,i}$s ($i=1, \, \dots , \, N_h$, $h=1, \, 2$) for the two populations are independent, and that the two sampling designs $P_1$, $P_2$ independently select samples from ${\mathcal U}_{N_{1}}$, ${\mathcal U}_{N_{2}}$. The following proposition is an immediate consequence of Proposition $\ref{asympt_hadam}$.

\begin{proposition}
\label{asympt_dom1}
Suppose assumptions A1-A6, C1, D1 hold for ${\mathcal U}_{N_{h}}$, $h=1, \, 2$, and that $n_1 / (n_1 + n_2 ) \rightarrow \tau$ as $N_1$, $N_2$ go to infinity, with $0 < \tau < 1$. Let further $\widehat{\phi}_H (p) = \widehat{L}_{H,1} (p) - \widehat{L}_{H,2} (p)$, and
\begin{eqnarray}
\mathcal{F}^H_{N_{1}, N_{2}} ( \cdot ) = \sqrt{\frac{n_{1} n_{2}}{n_{1} + n_{2}}} ( \widehat{\phi}_H ( \cdot ) - \phi ( \cdot ) ) . \label{eq:lordiff_proc}
\end{eqnarray}
Then, as $N_1$, $N_2$ tend to infinity, the sequence of stochastic  processes $( \ref{eq:lordiff_proc} )$ converges weakly to a Gaussian process that can be represented as
\begin{eqnarray}
\mathcal{F}^H ( \cdot ) = \sqrt{1- \tau} \mathcal{L}^H_1 ( \cdot ) + \sqrt{\tau } \mathcal{L}^H_2 ( \cdot ) \nonumber
\end{eqnarray}
\noindent where $\mathcal{L}^H_h ( \cdot )$, $h=1$, $2$ are two independent Gaussian processes having representation $( \ref{eq:proc_l} )$.
\end{proposition}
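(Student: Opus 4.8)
The plan is to reduce the claim to Proposition~\ref{asympt_hadam}, applied separately to each of the two populations, together with the two independence hypotheses and a continuous-mapping argument that absorbs the deterministic scaling factors.

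First I would center and decompose the process. Since $\phi = L_1 - L_2$, for every $p$ one has
\[
\mathcal{F}^H_{N_1, N_2}(p) = \sqrt{\frac{n_1 n_2}{n_1 + n_2}} \left[ ( \widehat{L}_{H,1}(p) - L_1(p)) - (\widehat{L}_{H,2}(p) - L_2(p)) \right].
\]
Factoring the normalising constant in the two symmetric ways $\sqrt{n_1 n_2/(n_1+n_2)} = \sqrt{n_2/(n_1+n_2)}\,\sqrt{n_1} = \sqrt{n_1/(n_1+n_2)}\,\sqrt{n_2}$, and writing $\mathcal{L}^H_{N_h,h}(\cdot) = \sqrt{n_h}(\widehat{L}_{H,h}(\cdot) - L_h(\cdot))$ for the centred estimator of population $h$, this becomes
\[
\mathcal{F}^H_{N_1, N_2}(\cdot) = \sqrt{\frac{n_2}{n_1+n_2}}\, \mathcal{L}^H_{N_1,1}(\cdot) - \sqrt{\frac{n_1}{n_1+n_2}}\, \mathcal{L}^H_{N_2,2}(\cdot).
\]

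Next I would invoke Proposition~\ref{asympt_hadam} for each population, which supplies the marginal weak convergences $\mathcal{L}^H_{N_h,h}(\cdot) \stackrel{w}{\rightarrow} \mathcal{L}^H_h(\cdot)$ in $D[0,1]$, $h = 1, 2$. The decisive structural step is to promote these two marginal statements to a single joint weak convergence of the pair $(\mathcal{L}^H_{N_1,1}, \mathcal{L}^H_{N_2,2})$ in the product space $D[0,1] \times D[0,1]$, with \emph{independent} limits $\mathcal{L}^H_1, \mathcal{L}^H_2$; this is precisely where one uses that the $Y_{h,i}$s of the two populations are independent and that $P_1, P_2$ draw their samples independently, since independence of the two process sequences lifts marginal convergence to joint convergence with independent coordinates. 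As the scaling coefficients are deterministic and converge, $\sqrt{n_2/(n_1+n_2)} \to \sqrt{1-\tau}$ and $\sqrt{n_1/(n_1+n_2)} \to \sqrt{\tau}$, I would then apply the continuous mapping theorem, in the Slutsky form for converging scalar multipliers, to the map $(g_1, g_2) \mapsto \sqrt{1-\tau}\, g_1 - \sqrt{\tau}\, g_2$, obtaining
\[
\mathcal{F}^H_{N_1,N_2}(\cdot) \stackrel{w}{\rightarrow} \sqrt{1-\tau}\, \mathcal{L}^H_1(\cdot) - \sqrt{\tau}\, \mathcal{L}^H_2(\cdot).
\]
Finally, since $\mathcal{L}^H_1$ and $\mathcal{L}^H_2$ are independent zero-mean Gaussian processes, replacing $\mathcal{L}^H_2$ by $-\mathcal{L}^H_2$ leaves the joint law unchanged, so the limit coincides in distribution with $\sqrt{1-\tau}\,\mathcal{L}^H_1 + \sqrt{\tau}\,\mathcal{L}^H_2$, which is the asserted representation.

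The part I expect to require the most care is the passage from marginal to joint weak convergence together with the continuity of the map $(g_1, g_2) \mapsto \alpha g_1 + \beta g_2$ in the Skorokhod topology, since addition is not jointly continuous on all of $D[0,1] \times D[0,1]$. This difficulty is resolved by Proposition~\ref{continuous_traject}, which guarantees that each limit $\mathcal{L}^H_h$ has almost surely continuous trajectories (the Lorenz process is built by integrating $\mathcal{W}^H$, cf.\ $( \ref{eq:proc_g} )$--$( \ref{eq:proc_l} )$), and the linear combination map is continuous at every pair of continuous functions. The remaining substantive point is the tightness/independence argument underlying the joint convergence, which follows from the marginal tightness already established in Proposition~\ref{asympt_hadam} combined with the independence of the two designs.
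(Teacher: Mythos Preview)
Your proposal is correct and matches the paper's approach: the paper does not give a proof at all, stating only that the result ``is an immediate consequence of Proposition~\ref{asympt_hadam}'', which is exactly the reduction you carry out. Your write-up in fact supplies the details the paper omits (the joint-convergence-under-independence step, the Slutsky-type handling of the deterministic factors, the Skorokhod-continuity caveat resolved via Proposition~\ref{continuous_traject}, and the distributional sign flip for $\mathcal{L}^H_2$).
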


The idea pursued here to construct a test procedure for the hypothesis of Lorenz dominance is simple. It is summarized below.
\begin{itemize}
\item[-] Construct a confidence band of level $1- \alpha$ for $ \phi ( \cdot ) $.
\item[-] If, for at least a $p \in (0, \, 1)$ the confidence band is under the horizontal axis, reject the stochastic dominance hypothesis.
\item[-]Otherwise, ``accept'' stochastic dominance hypothesis.
\end{itemize}
Clearly, the test procedure has significance level equal to $\alpha$. A graphical illustration of the procedure is in Fig. $\ref{fig:figura1}$ A, B.

\begin{figure}
\includegraphics[width=0.9\textwidth]{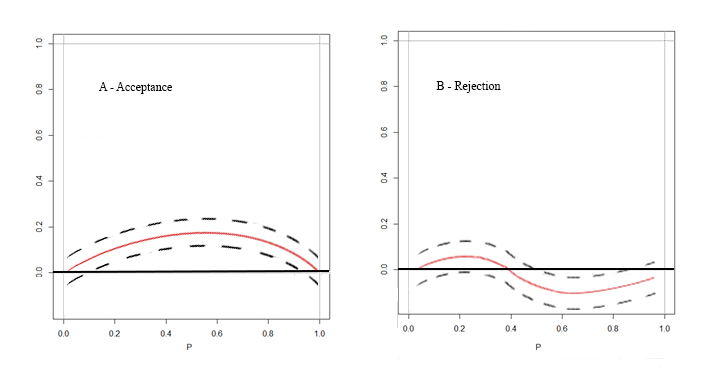}
\caption{Graphical illustration of the test procedure}
\label{fig:figura1}
\end{figure}

A confidence band for $\phi ( \cdot )$ can be constructed by using the resampling procedure of Section $\ref{sec:resamp}$.

\begin{itemize}
\item[1.] For sample drawn from population ${\mathcal U}_{N_{h}}$, generate $M$ independent samples of size $n_h$ $(h=1, \, 2)$ on the basis of the two-phase procedure described in Section $\ref{sec:resamp}$.
\item[2.] For each generated sample, compute the corresponding estimates of the Lorenz curves:
\begin{eqnarray}
\widehat{L}^{*}_{h,m} (p) , \;\; 0 \leq p \leq 1 ;
 \;\; m=1, \, \dots , \, M, \; h=1, \, 2.
\nonumber
\end{eqnarray}
\item[3.] Compute the $M$ quantities
\begin{eqnarray}
\widehat{\phi}^{*}_{m} (p) =
\widehat{L}^{*}_{1,m} (p) -
\widehat{L}^{*}_{2,m} (p) , \;\; 0 \leq p \leq 1 ;
 \;\; m=1, \, \dots , \, M.
\nonumber
\end{eqnarray}
\item[4.] Compute the $M$ quantities
\begin{eqnarray}
Z^{*}_{1,2,m} = \sqrt{\frac{n_{1} n_{2}}{n_{1} + n_{2}}} \sup_{0 \leq p \leq 1} \left \vert \widehat{\phi}^{*}_{m} (p) - \widehat{\phi} (p) \right \vert
; \;\;
m=1, \, \dots , \, M .
\label{eq:def_zdiff}
\end{eqnarray}
\end{itemize}

Denote now by
\begin{eqnarray}
\widehat{T}^{*}_{1,2,M} (z) = \frac{1}{M} \sum_{m=1}^{M} I_{( Z^{*}_{1,2,,m} \leq z )} , \;\; z \in \mathbb{R}
\label{eq:resampling_edfdiff}
\end{eqnarray}
\noindent the empirical distribution function of $Z^{*}_{1,2,m}$s, and by
\begin{eqnarray}
\widehat{T}^{* -1}_{1,2,M} ( u) = \inf \{ z : \; \widehat{T}^{*}_{1,2,M} (z) \geq u \} , \;\; 0 <u<1
\label{eq:resampling_quantiles_diff}
\end{eqnarray}
\noindent the corresponding $u$-quantile.

Using the same arguments as in Proposition $\ref{prop_resampling_comp}$, it is now possible to prove the following, further result.
\begin{proposition}
\label{prop_resampling_compdiff}
For almost all $y_{h, i}$s, $x_{h, i}$s values $(h=1, \, 2)$, and in probability w.r.t. the sample designs,
the following results hold:
\begin{eqnarray}
\, & \, & \sup_{z} \left \vert \widehat{T}^{*}_{1,2,M} (z) - Pr \left ( \sup_{0 \leq p \leq 1} \left \vert \mathcal{F} (p) \right \vert
\leq z \right ) \right \vert \stackrel{a.s.}{\rightarrow} 0 \,
; \label{eq:risult_1diff} \\
\, & \, & \widehat{T}^{*-1}_{1,2,M} (u) \stackrel{a.s.}{\rightarrow} d_u , \;\; \forall \,
0<p<1 \label{eq:risult_2diff}
\end{eqnarray}
\noindent as $M$, $n_h$, $N_h$ $(h=1, \, 2)$ go to infinity.
\end{proposition}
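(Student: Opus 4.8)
The plan is to reproduce the proof of Proposition \ref{prop_resampling_comp} almost verbatim, with the single resampled Lorenz process replaced by the resampled difference process, the only genuinely new ingredient being the independence of the two populations and of the two resampling designs. First I would identify the correctly normalized resampled difference process with a linear combination of the two resampled Lorenz processes. Since $\widehat{\phi} = \widehat{L}_{H,1} - \widehat{L}_{H,2}$ and $\widehat{\phi}^*_m = \widehat{L}^*_{1,m} - \widehat{L}^*_{2,m}$, one has
\begin{eqnarray}
\sqrt{\frac{n_1 n_2}{n_1+n_2}} ( \widehat{\phi}^*_m ( \cdot ) - \widehat{\phi} ( \cdot ) ) = \sqrt{\frac{n_2}{n_1+n_2}} \, \mathcal{L}^{*H}_{1} ( \cdot ) - \sqrt{\frac{n_1}{n_1+n_2}} \, \mathcal{L}^{*H}_{2} ( \cdot ) , \nonumber
\end{eqnarray}
\noindent where $\mathcal{L}^{*H}_{h} ( \cdot ) = \sqrt{n_h} ( \widehat{L}^*_{h,m} ( \cdot ) - \widehat{L}_{H,h} ( \cdot ) )$ is the resampled Lorenz process of population $h$. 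By Corollary \ref{asymp_resampled_lorenz}, applied separately to each population, each $\mathcal{L}^{*H}_{h} ( \cdot )$ converges weakly (conditionally on the data, for a set of samples of probability tending to one) to $\mathcal{L}^H_h ( \cdot )$. Because the two populations are independent and the resampling designs $P^*_1$, $P^*_2$ act independently, the two resampled processes are conditionally independent, so the pair converges jointly to the pair of independent limits $( \mathcal{L}^H_1 , \mathcal{L}^H_2 )$; note also that the two ``probability tending to one'' sets furnished by Corollary \ref{asymp_resampled_lorenz} intersect in a set of probability tending to one, by independence. Since $n_2/(n_1+n_2) \to 1-\tau$ and $n_1/(n_1+n_2) \to \tau$, Slutsky's theorem yields weak convergence of the displayed process to $\sqrt{1-\tau}\, \mathcal{L}^H_1 ( \cdot ) - \sqrt{\tau}\, \mathcal{L}^H_2 ( \cdot )$, which has the same law as the process $\mathcal{F}$ of Proposition \ref{asympt_dom1} (the sign of the second term is immaterial, $\mathcal{L}^H_2$ being a centered Gaussian process).

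Next I would push this through the supremum functional. The limit $\mathcal{F}$ has, by Proposition \ref{continuous_traject}, continuous trajectories on $[0,1]$; on $C[0,1]$ the Skorokhod and uniform topologies coincide and the map $g \mapsto \sup_{0 \le p \le 1} | g(p) |$ is continuous, so the continuous mapping theorem gives that $Z^*_{1,2,m}$ converges in distribution, conditionally on the data, to $\sup_{0 \le p \le 1} | \mathcal{F} (p) |$. Moreover, provided the covariance kernel of $\mathcal{F}$ is non-singular, the results of \cite{lifs82} ensure that $\sup_{0 \le p \le 1}| \mathcal{F}(p) |$ is absolutely continuous with strictly positive density; hence its distribution function $G(z) = Pr ( \sup_{0 \le p \le 1}| \mathcal{F}(p)| \le z )$ is continuous and strictly increasing.

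Then comes the Glivenko--Cantelli step combined with the weak limit, exactly as in Proposition \ref{prop_resampling_comp}. For fixed populations, $Z^*_{1,2,1}, \dots , Z^*_{1,2,M}$ are conditionally i.i.d.\ replicates, so as $M \to \infty$ the empirical distribution function $\widehat{T}^*_{1,2,M}$ converges uniformly (a.s.) to the exact conditional distribution function $F^*_{n_1,n_2}$ of $Z^*_{1,2}$; as $N_1, N_2 \to \infty$ the latter converges pointwise to $G$, and since $G$ is continuous Polya's theorem upgrades this to uniform convergence. A triangle inequality, $\sup_z | \widehat{T}^*_{1,2,M}(z) - G(z) | \le \sup_z | \widehat{T}^*_{1,2,M}(z) - F^*_{n_1,n_2}(z) | + \sup_z | F^*_{n_1,n_2}(z) - G(z) |$, combines the two limits and delivers $( \ref{eq:risult_1diff} )$. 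Finally, $( \ref{eq:risult_2diff} )$ follows from $( \ref{eq:risult_1diff} )$ by the standard fact that uniform convergence of distribution functions to a continuous, strictly increasing limit entails convergence of the associated quantile functions (cfr.\ \cite{vandervaart98}).

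The main obstacle is the joint conditional weak convergence of the two resampled Lorenz processes with the correct $\sqrt{n_1 n_2/(n_1+n_2)}$ normalization; once the identity above is established and independence is invoked, the remainder is a verbatim repetition of the single-population argument of Proposition \ref{prop_resampling_comp}, so I would keep that part brief.
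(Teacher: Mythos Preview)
Your proposal is correct and follows essentially the same route as the paper, which does not give a separate proof of Proposition~\ref{prop_resampling_compdiff} but simply states that it is obtained ``using the same arguments as in Proposition~\ref{prop_resampling_comp}''. Your write-up is precisely that: you spell out the two-population version, including the linear-combination identity and the independence argument needed to pass from marginal to joint conditional weak convergence of $(\mathcal{L}^{*H}_1,\mathcal{L}^{*H}_2)$, and then rerun the triangle-inequality step (empirical d.f.\ close to resampling d.f., resampling d.f.\ close to limit $G$) together with the absolute continuity of $\sup_p|\mathcal{F}(p)|$ from \cite{lifs82}. The only cosmetic difference is that the paper's proof of Proposition~\ref{prop_resampling_comp} invokes the Dvoretzky--Kiefer--Wolfowitz inequality plus Borel--Cantelli for the first term, whereas you appeal directly to Glivenko--Cantelli; these are equivalent here.
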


As a consequence of Proposition $\ref{prop_resampling_compdiff}$, the region
\begin{eqnarray}
\, & \, &
\left [ \widehat{\phi}_H (p) - \sqrt{\frac{n_{1} + n_{2}}{n_{1} n_{2}}} T^{* -1}_{1,2,M} (1- \alpha ) , \;
 \widehat{\phi}_H (p) +  \sqrt{\frac{n_{1} + n_{2}}{n_{1} n_{2}}} T^{* -1}_{1,2,M} (1- \alpha ) ; \;\; 0 \leq p \leq 1
\right ] \label{conf_band_percentile_diff}
\end{eqnarray}
\noindent is a confidence band for $\phi ( \cdot )$ with asymptotic level $1- \alpha$ as $N_h$ $(h=1, \, 2)$ and $M$ increase.

\section{Simulation study}
\label{sec:simulation}

In this section a simulation study is performed, in order to evaluate the actual confidence level of the proposed confidence bands and intervals.
The  simulation scenario is similar to \cite{antal:11}. In detail,  a finite population of size $N$ has been generated from the model
\begin{eqnarray}
y_i=(\beta_{0}+\beta_1x_{i}^{1.2}+\sigma\epsilon_i)^2+c
\end{eqnarray}
where $x_i=|j_i|$ and $j_i \sim N(0,7)$, $\epsilon_i \sim N(0,1)$ and $\sigma=15$. According to \cite{antal:11}, the regression parameters $\beta_0=12.5$, $\beta_1=3$ and $c=4000$ have been chosen.
As far as the inclusion probabilities are concerned, they are taken proportional to the value of a variable $Z$, generated from the equation $Z=Y^{0.2} W$ where $W$ has a lognormal distribution $(ln N(\mu,\sigma^2))$ with parameters $\mu=0$ and $\sigma^2=0.025$.
Three population sizes, $N= 300$, $500$, $1000$, and one sampling fraction ($n/N = 0.2$) have been considered. For each combination $(n, \, N)$, $1000$ samples of size $n$ have been generated according to two sampling schemes: Pareto design (PA) and Sampford design (SA), with inclusion probabilities proportional to $z_i$s.
Actual coverage probabilities for confidence bands for the superpopulation Lorenz curve $L( \cdot )$, as well as for confidence intervals for the superpopulation Gini concentration ratio $G$, have been computed. The nominal level was $0.95$ in all cases. Results are shown in Table $\ref{tab:tab01}$.
\bigskip

\begin{table}[htbp]
\caption{Coverage probabilities and interval size (in parenthesis) for Lorenz curve and Gini concentration ratio. Nominal level: $0.95$}
\label{tab:tab01}
\begin{tabular}{|c|c|c||c|c|c|}
\hline
\multicolumn{3}{|c||}{Pareto sampling design} & \multicolumn{3}{|c|}{Sampford sampling design} \\
\hline \hline
\multicolumn{6}{|c|}{Population ($N$) and sample ($n$) sizes}\\
\hline
$N=300$ & $N=500$ & $N=1000$ & $N=300$ & $N=500$ & $N=1000$ \\
($n=60$) & ($n=100$) & ($n=200$) & ($n=60$) & ($n=100$) & ($n=200$)\\
\hline \hline
\multicolumn{6}{|c|}{Confidence band for Lorenz curve}\\
\hline
$0.892$ &	$0.918$ & $0.947$ & $0.889$ & $0.919$ & $0.951$ \\
($0.086$) & ($0.068$) & ($0.050$) & ($0.086$) & ($0069$) & ($0.050$) \\
\hline \hline
\multicolumn{6}{|c|}{Confidence interval for Gini coefficient (Normal approximation)}\\
\hline
$0.875$ & $0.915$ & $0.944$ & $0.877$ & $0.915$ & $0.946$ \\
($0.080$) & ($0.065$) & ($0.047$) & ($0.081$) & ($0.065$) & ($0.047$) \\
\hline \hline
\multicolumn{6}{|c|}{Confidence interval for Gini coefficient (Pivot percentile method)}\\
\hline
$0.861$ & $0.904$ & $0.922$ & $0.863$ & $0.907$ & $0.926$ \\
($0.078$) & ($0.064$) & ($0.046$) & ($0.080$) & ($0.065$) & ($0.047$) \\
\hline \hline
\end{tabular}
\end{table}

As it appear from Table $\ref{tab:tab01}$, the performance of the confidence band for the whole Lorenz curve is generally good, and the actual coverage
probability is close to the nominal level $0.95$, for both Pareto and Sampford sampling designs.
Similar considerations hold for confidence intervals for Gini concentration ratio, $G$. The method based on normal approximation and variance estimated by
resampling $( \ref{eq:conf_int_norm} )$ performs slightly better than the pivot-percentile method
$( \ref{conf_int_percentile} )$. Again, results are virtually identical for both Pareto and Sampford designs.

\newpage

{\Large \noindent {\bf Appendix}}

\noindent
\begin{proof}[{\bf Proof of Proposition \ref{continuous_traject}}]
Suppose that $y < t$. From $( \ref{eq:cov_ker_hajek2} )$ it is not difficult to see that
\begin{eqnarray}
E[ ( \mathcal{W}^H (t) - \mathcal{W}^H (y) )^2 ] & = & C_1 (y, \, y) + C_1 (t, \, t) - 2 C_1 (t, \, y)
+ f ( C_2 (y, \, y) + C_2 (t, \, t) - 2 C_2 (t, \, y) )
 \nonumber \\
& = & \mathbb{E} [ X_1 ] \left ( T_{-1} (t) - T_{-1} (y) \right ) + f \left ( F(t) - F(y) \right ) \nonumber \\
\, & \, & - \frac{f^{3}}{d} \left \{ \left ( \frac{T_{1} (y)}{\mathbb{E} [ X_{1} ]} - F(y)  \right )
\left ( \frac{T_{1} (t)}{\mathbb{E} [ X_{1} ]} - F(t)  \right )
\right \} ^2 \nonumber \\
\, & \, & - 2 \mathbb{E} [ X_1 ] ( T_{-1} (y) - T_{-1} (y) ) ( F(y) - F(t)) \nonumber \\
\, & \, & + 2 \mathbb{E} [ X_1 ] \left ( \mathbb{E} [ X_1^{-1} ] +1 \right ) ( F(t) - F( y) )^2 \nonumber \\
\, & \, & + f \{ (F(t) - F(y) )  - (F(t) - F(y) )^2 \} .
\label{eq:disug_kolm_1}
\end{eqnarray}
Assumption C1 implies that
\begin{eqnarray}
\left \vert T_{\alpha} (t) -  T_{\alpha} (y) \right \vert \leq M_{\alpha} \left \vert F (t) - F(y) \right \vert \nonumber
\end{eqnarray}
\noindent so that from $( \ref{eq:disug_kolm_1} )$ it is not difficult to see that
\begin{eqnarray}
E[ ( \mathcal{W}^H (t) - \mathcal{W}^H (y) )^2 ] \leq C \left \vert F(t) - F(y) \right \vert .
\label{eq:disug_kolm_2}
\end{eqnarray}
\noindent $C$ being an appropriate constant. Inequality $( \ref{eq:disug_kolm_2} )$ also holds when $y>t$. Hence, in terms of the process
$B^H$ introduced above we may write
\begin{eqnarray}
\mathbb{E} \left [ \left ( B^H (t) - B^H (y) \right )^2 \right ] \leq C \vert t-y \vert \;\; \forall \, y, \, t \in [0, \, 1] .
\label{eq:disug_kolm_3}
\end{eqnarray}
Inequality $( \ref{eq:disug_kolm_3} )$ and the Gaussianity of $B^H (t) - B^H (y)$, in their turn, imply that
\begin{eqnarray}
\mathbb{E} \left [ \left ( B^H (t) - B^H (y) \right )^2 \right ] \leq \frac{C}{\left \vert \log ( t-y ) \right \vert^{\beta}} \;\;
 \forall \, \beta > 1 , \; \forall \, y, \, t \in [0, \, 1] .
\label{eq:disug_kolm_4}
\end{eqnarray}
Observing that $Pr ( B^H (0) =0) = Pr (B^H (1) =0) =1$,
Proposition $\ref{continuous_traject}$ now follows from $( \ref{eq:disug_kolm_4} )$ and \cite{leadweis69}.
\end{proof}

\noindent
\begin{proof}[{\bf Proof of Proposition \ref{prop_resampling_comp}}]
Let
$$
R^{*}_{n} (z) = Pr_{P^{*}} \left ( \left .  Z^{*}_{n,m} \leq z  \, \right
 \vert \yN , \, \xN , \, \DN , \, N^*_1 , \, \dots N^*_M  \right )
$$
be the (resampling) d.f. of
$Z^{*}_{n,m} $ $( \ref{eq:def_z} )$. By Dvoretzky-Kiefer-Wolfowitz inequality (cfr. \cite{massart90}), we have first
\begin{eqnarray}
Pr \left ( \left . \sup_{z} \left \vert \widehat{R}^{*}_{n,M} (z) - R^{*}_{n} (z)  \right \vert  > \epsilon \,
\right \vert \yN , \, \xN , \, \DN , \, N^*_1 , \, \dots N^*_M \right ) \leq 2 \exp \left \{ -2 M \epsilon^2   \right \} .
\label{eq:dkw_ineq}
\end{eqnarray}
\noindent Using the Borel-Cantelli first lemma, and taking into account that $R^{*}_{n} (z)$ converges uniformly to
$Pr ( \sup_p \vert  \mathcal{L} (p) \leq z )$, $( \ref{eq:risult_1} )$ immediately follows. Statement $( \ref{eq:risult_2} )$
follows from $( \ref{eq:risult_1} )$ and the absolute continuity of the distribution of $\sup_p \vert \mathcal{L} (p) \vert $
(cfr. \cite{lifs82}).
\end{proof}

\begin{proof}[{\bf Proof of Proposition \ref{prop_rasampling_R}}]
Proof of $( \ref{eq:risult_R1} )$ and $( \ref{eq:risult_R2} )$ is similar to Proposition $\ref{prop_resampling_comp}$.
As far as $( \ref{eq:risult_3R} )$ is concerned, it is a consequence of Th. 2.5.5. in \cite{sensinger:93} (pp. 90-91).
\end{proof}

\newpage

\bibliography{sampling_nov_13_2018}
\bibliographystyle{natbib}

\end{document}